\newif\ifarxiv
\newcommand{\arxivmath}{
	\ifarxiv
	$$
	\else
	$
	\fi}
\newcommand{\constSteklov}{St.\ Petersburg Department of Steklov Institute of Mathematics of the Russian Academy of Sciences, St.\ Petersburg, Russia}
\newcommand{\constGrant}{This research was supported by the Russian Science Foundation (project 16-11-10123)}
\newtheorem{brule}{Branching rule}
\newtheorem{rrule}{Reduction rule}
\newtheorem{hypothesis}{Hypothesis}
\newtheorem{claim}{Claim}
\newcommand{\problemtitle}[1]{\gdef\@problemtitle{#1}}
\newcommand{\probleminput}[1]{\gdef\@probleminput{#1}}
\newcommand{\problemquestion}[1]{\gdef\@problemquestion{#1}}
	\par\addvspace{.5\baselineskip}
	\par\addvspace{.5\baselineskip}
\DeclareMathOperator{\thr}{thr}
\renewcommand{\O}{\mathcal{O}}
\newcommand{\Ostar}[1]{\O^*\left(#1\right)}
\author{Ivan Bliznets\thanks{\constSteklov} \addtocounter{footnote}{-1} \\ \texttt{iabliznets@gmail.com}
	\and
		Danil Sagunov\footnotemark {} \\ \texttt{danilka.pro@gmail.com}}
\newtheorem{theorem}{Theorem}
\newtheorem{lemma}{Lemma}
\newtheorem{corollary}{Corollary}
\newtheorem{definition}{Definition}
\title{Solving Target Set Selection with Bounded Thresholds Faster than $2^n$\thanks{\constGrant}}
\begin{document}
\newpage
\maketitle

\begin{abstract}
In this paper we consider the \textsc{Target Set Selection} problem. The problem naturally arises in many fields like economy, sociology, medicine. In the \textsc{Target Set Selection} problem one is given a graph $G$ with a function $\thr: V(G) \to \mathbb{N} \cup \{0\}$ and integers $k, \ell$. The goal of the problem is to activate at most $k$ vertices initially so that at the end of the activation process there is at least $\ell$ activated vertices. The activation process occurs in the following way: (i) once activated, a vertex stays activated forever; (ii) vertex $v$ becomes activated if at least $\thr(v)$ of its neighbours are activated. The problem and its different special cases were extensively studied from approximation and parameterized points of view. For example, parameterizations by the following parameters were studied: treewidth, feedback vertex set, diameter, size of target set, vertex cover, cluster editing number and others.

Despite the extensive study of the problem it is still unknown whether the problem can be solved in $\Ostar{(2-\epsilon)^n}$ time for some $\epsilon >0$. We partially answer this question by presenting several faster-than-trivial algorithms that work in cases of constant thresholds, constant dual thresholds or when the threshold value of each vertex is bounded by one-third of its degree. Also, we show that the problem  parameterized by $\ell$ is W[1]-hard even when all thresholds are constant.

\end{abstract}
 
\section{Introduction}

In this paper we consider the \textsc{Target Set Selection} problem.
 In the problem one is given a graph $G$ with a function $\thr: V(G) \rightarrow \mathbb{N}\cup\{0\}$ (a \emph{threshold function}), and two integers $k, \ell$. The question of the problem is to find a vertex subset $S \subseteq V(G)$ (a \emph{target set}) such that $|S|\leq k$ and if we initially activate $S$ then eventually at least $\ell$ vertices of $G$ become activated. The activation process is defined by the following two rules: (i) if a vertex becomes activated it stays activated forever; (ii) vertex $v$ becomes activated if either it was activated initially or at some moment there is at least $\thr(v)$ activated vertices in the set of its neighbours $N(v)$. Often in the literature by \textsc{Target Set Selection} people refer to the special case of \textsc{Target Set Selection} where $\ell=|V(G)|$, i.e.\ where we need to activate all vertices of the graph. We refer to this special case as \textsc{Perfect Target Set Selection}.

\textsc{Target Set Selection} problem naturally arises in such areas as economy, sociology, medicine. Let us give an example of a scenario \cite{kempe2003maximizing,BenZwi2011} under which  \textsc{Target Set Selection} may arise in the marketing area. Often people start using some product when they find out that some number of their friends are already using it. Keeping this in mind, it is reasonable to start the following advertisement campaign of a product: give out the product for free to some people; these people start using the product, and then some friends of these people start using the product, then some friends of these friends and so on. For a given limited budget for the campaign we would like to give out the product in a way that eventually we get the most users of the product. Or we may be given the desired number of users of the product and we would like to find out what initial budget is sufficient. It is easy to see that this situation is finely modelled by the \textsc{Target Set Selection} problem.

The fact that \textsc{Target Set Selection} naturally arises in many different fields leads to a situation that the problem and its different special cases were studied under different names: \textsc{Irreversible $k$-Conversion Set} \cite{Centeno2011, Dreyer2009}, \textsc{P$_3$-Hull Number} \cite{Arajo2013}, \textsc{$r$-Neighbour Bootstrap Percolation} \cite{balogh2010bootstrap}, \textsc{$(k, \ell)$-Influence} \cite{Bazgan2014}, monotone dynamic monopolies \cite{peleg1998size}, a generalization of \textsc{Perfect Target Set Selection} on the case of oriented graphs is known as \textsc{Chain Reaction Closure} and \textsc{$t$-Threshold Starting Set} \cite{abrahamson1995fixed}. In \cite{Centeno2011}, Centeno et al.\ showed that \textsc{Perfect Target Set Selection} is NP-hard even when all threshold values are equal to two.

 There is an extensive list of results on \textsc{Target Set Selection} from  parameterized and approximation point of view. Many different parameterizations  were studied in the literature such as size of the target set, treewidth, feedback vertex set, diameter, vertex cover, cluster editing number and others (for more details, see Table~\ref{table:param_res}). Most of these studies consider the \textsc{Perfect Target Set Selection} problem, i.e.\ the case where $\ell=|V(G)|$. However, FPT membership results for parameters treewidth \cite{BenZwi2011} and cliquewidth \cite{hartmann2018target} were given for the general case of \textsc{Target Set Selection}. From approximation point of view, it is known that the minimization version (minimize the number of vertices in a target set for a fixed $\ell$) of the problem is very hard and cannot be approximated within  $\O(2^{\log^{1-\epsilon} n})$ factor for any $\epsilon > 0$, unless $\text{NP} \subseteq \text{DTIME}(n^{polylog(n)})$. This inappoximability result holds even for graphs of constant degree with all thresholds being at most two \cite{Chen2009}. Also, the maximization version of the problem (maximize the number of activated vertices for a fixed $k$) is NP-hard to approximate within a factor of $n^{1-\epsilon}$ for any $\epsilon > 0$ \cite{kempe2003maximizing}.

\begin{table}[!ht]
\begin{center}
\ifarxiv
\begin{tabular}{|>{\centering} m{2.55cm}|c|c|c|}
\else
\begin{tabular}{|c|c|c|c|}
\fi
\hline
Parameter & Thresholds & Result &  Reference\\
\hline

Bandwidth $b$ & general & $\Ostar{b^{\O(b\log b)}}$ & Chopin et al.~\cite{chopin2014constant} \\

Clique Cover Number $c$ & general & NP-hard for $c=2$ & Chopin et al.~\cite{chopin2014constant}\\

Cliquewidth $\text{cw}$ & constant & $\Ostar{(\text{cw} \cdot t)^{\O(\text{cw} \cdot t)}}$ & Hartmann~\cite{hartmann2018target}  \\

Cluster Editing Number $\zeta$ & general & $\Ostar{16^{\zeta}}$ & Nichterlein et al.~\cite{nichterlein2013tractable} \\

Diameter $d$ & general & NP-hard for $d=2$ & Nichterlein et al.~\cite{nichterlein2013tractable} \\

Feedback Edge Set Number $f$ & general & $\Ostar{4^f}$ & Nichterlein et al.~\cite{nichterlein2013tractable} \\

Feedback Vertex Set Number & general & W[1]-hard & Ben-Zwi et al.~\cite{BenZwi2011} \\

Neighborhood Diversity $\text{nd}$ & majority & $\Ostar{\text{nd}^{\O(\text{nd})}}$ & Dvo{\v{r}}{\'a}k et al.~\cite{dvovrak2016target} \\

& general & W[1]-hard & Dvo{\v{r}}{\'a}k et al.~\cite{dvovrak2016target} \\

Target Set Size $k$ & constant & W[P]-complete & \makecell{Abrahamson et al.~\cite{abrahamson1995fixed}, \\ Bazgan et al.~\cite{Bazgan2014}} \\

Treewidth $w$ & constant & $\Ostar{t^{\O(w\log w)}}$  & Ben-Zwi et al.~\cite{BenZwi2011}\\

& majority & W[1]-hard  & Chopin et al.~\cite{chopin2014constant}\\

Vertex Cover Number $\tau$ & general & $\Ostar{2^{(2^\tau+1)\cdot \tau}}$ & Nichterlein et al.~\cite{nichterlein2013tractable} \\

\hline
\end{tabular}
\end{center}
\caption{Some known results on different parameterizations of \textsc{Perfect Target Set Selection}. In the Thresholds column we indicate restrictions on the threshold function under which the results were obtained. Here $t$ denotes the maximum threshold value.}\label{table:param_res}
\end{table}

Taking into account many intractability results for the problem, it is natural to ask whether we can beat a trivial brute-force algorithm for this problem or its important subcase \textsc{Perfect Target Set Selection}. In other words, can we construct an algorithm with running time $\Ostar{(2-\epsilon)^n}$ for some $\epsilon > 0$. Surprisingly, the answer to this question is still unknown. Note that the questions whether we can beat brute-force naturally arise in computer science and have significant theoretic importance. Probably, the most important such question is SETH hypothesis which informally can be stated as:

\begin{hypothesis}[SETH]\label{hyp:seth}
 There is no algorithm for SAT with running time $\Ostar{(2-\epsilon)^n}$ for any $\epsilon>0$.
\end{hypothesis}

Another example of such question is the following hypothesis:
\begin{hypothesis}\cite{Pilipczuk2015}\label{hyp:hereditary}
 For every hereditary graph class $\Pi$ that can be recognized in polynomial time, the
\textsc{Maximum Induced $\Pi$-Subgraph} problem can be solved in $\Ostar{(2-\epsilon)^n}$ time for some $\epsilon>0$.
\end{hypothesis}

There is a significant number of papers~\cite{bliznets2013largest,pilipczuk2012finding,fomin2008solving,fomin2008minimum,cygan2014solving,razgon2007computing, fomin2011exact, cygan2014scheduling,cygan2010capacitated,binkele2011breaking,fomin2014enumerating} with the main motivation to present an algorithm faster than the trivial one. 

As in the stated hypotheses and mentioned papers, our goal is to come up with an algorithm that works faster than brute-force. We partially answer this question by presenting several $\Ostar{(2-\epsilon)^n}$ running time algorithms for \textsc{Target Set Selection} when thresholds, i.e.\ the values of $\thr(v)$, are bounded by some fixed constant and in case when the values of $\thr(v)-\deg(v)$, so-called \emph{dual thresholds}, are bounded by some fixed constant for every $v \in V(G)$. 
We think that this result may be interesting mainly because of the following two reasons. Firstly, the result is established for a well-studied problem with many applications and hence can reveal some important combinatorial or algorithmic structure of the problem. Secondly, maybe by resolving the asked question we could make progress in resolving hypotheses~\ref{hyp:seth}, \ref{hyp:hereditary}.

\textbf{Our results.}
In this paper, we establish the following algorithmic results.  

\textsc{Perfect Target Set Selection} can be solved in 
\begin{itemize}
    \item $\Ostar{1.90345^n}$ if for every $v \in V(G)$ we have $\thr(v) \le 2$;
    \item $\Ostar{1.98577^n}$ if for every $v \in V(G)$ we have $\thr(v) \le 3$;
    \item $\Ostar{(2-\epsilon_d)^n}$ randomized time if for every $v \in V(G)$ we have $\thr(v)\geq \deg(v)-d$.
\end{itemize}
\textsc{Target Set Selection} can be solved in 
\begin{itemize}
    \item $\Ostar{1.99001^n}$ if for every $v \in V(G)$ we have $\thr(v)\leq \lceil \frac{\deg(v)}{3} \rceil$;
    \item $\Ostar{(2-\epsilon_t)^n}$ if for every $v \in V(G)$ we have $\thr(v)\leq t$.
\end{itemize}
 
 We also prove the following lower bound.
 
  \textsc{Target Set Selection} parameterized by $\ell$ is W[1]-hard  even if
 \begin{itemize}
     \item $\thr(v)=2$ for every $v \in V(G)$.
 \end{itemize}

\section{Preliminaries}

\subsection{Notation and problem definition}

We use standard graph notation. We consider only simple graphs, i.e.\ undirected graphs without loops and multiple edges.  By $V(G)$ we denote the set of vertices of $G$ and by $E(G)$ we denote the set of its edges. We let $n=|V(G)|$. $N(v)$ denotes the set of neighbours of vertex $v \in V(G)$, and  $N[v]=N(v)\cup \{v\}$.  $\Delta(G)=\max_{v \in V(G)} \deg(v)$ denotes the maximum degree of $G$. By $G[F]$ we denote the subgraph of $G$ induced by a set $F$ of its vertices.  Define by $\deg_F(v)$ the degree of $v$ in the subgraph $G[F]$.

By $X_1 \sqcup X_2 \sqcup \ldots \sqcup X_m$ we denote the disjoint union of sets $X_1, X_2, \ldots, X_m$, i.e.\ $X_1\sqcup X_2\sqcup \ldots \sqcup X_m=X_1\cup X_2\cup \ldots \cup X_m$ with the additional restriction that $X_i \cap X_j=\emptyset$ for any distinct $i, j$.

For a graph $G$, threshold function $\thr$ and $X\subseteq V(G)$ we put $\mathcal{S}_0(X)=X$ and for every $i>0$ we define $\mathcal{S}_{i}(X)=\mathcal{S}_{i-1}(X) \cup \{v \in V(G) : |N(v) \cap \mathcal{S}_{i-1}(X)| \ge \thr(v)\}.$ We say that $v$ \emph{becomes activated in the $i^{th}$ round}, if $v \in \mathcal{S}_{i}(X) \setminus \mathcal{S}_{i-1}(X)$, i.e.\ $v$ is not activated in the $(i-1)^\text{th}$ round and is activated in the $i^\text{th}$ round. By \emph{activation process} yielded by $X$ we mean the sequence
$\mathcal{S}_0(X), \mathcal{S}_1(X), \ldots, \mathcal{S}_i(X), \ldots, \mathcal{S}_n(X).$ Note that $\mathcal{S}_n(X)=\mathcal{S}_{n+1}(X)$ as $\mathcal{S}_i(X) \subseteq \mathcal{S}_{i+1}(X)$ and  $n$ rounds is always enough for the activation process to converge. By $\mathcal{S}(X)$ we denote the set of vertices that eventually become activated, and we say that \emph{$X$ activates $\mathcal{S}(X)$ in $(G, \thr)$}. Thus, $\mathcal{S}(X)=\mathcal{S}_n(X)$.

We recall the definition of \textsc{Target Set Selection}.

\begin{problem}
    \problemtitle{{\scshape Target Set Selection}}
    \probleminput{A graph $G$ with thresholds $\thr: V(G) \to \mathbb{N} \cup \{0\}$, integers $k$, $\ell$.}
    \problemquestion{Is there a set $X\subseteq V(G)$ such that $|X|\le k$ and $|\mathcal{S}(X)| \ge \ell$?}
\end{problem}
We call a solution $X$ of \textsc{Target Set Selection} a \emph{target set of $(G, \thr)$}.

By \textsc{Perfect Target Set Selection} we understand a special case of \textsc{Target Set Selection} with $\ell=n$.  We call $X$ a \emph{perfect target set of $(G, \thr)$}, if it activates all vertices of $G$, i.e.\ $\mathcal{S}(X)=V(G)$.

Most of the algorithms described in this paper are recursive algorithms that use branching technique. Such algorithms are described by \textit{reduction rules}, that are used to simplify a problem instance, and \textit{branching rules}, that are used to solve an instance by recursively solving smaller instances. If a branching rule branches an instance of size $n$ into $r$ instances of size $n-t_1, n-t_2, \ldots, n-t_r$, we call $(t_1, t_2,\ldots, t_r)$ a \textit{branching vector} of this branching rule. By a \textit{branching factor} of a branching rule we understand a constant $c$ that is a solution of a linear reccurence corresponding to some branching vector of this rule; such constants are used to bound the running time of an algorithm following the rule with $c^n$.
Note that a branching rule may have multiple corresponding branching vectors and multiple corresponding branching factors.
By the \textit{worst branching factor} of a branching rule (or multiple branching rules, if they are applied within the same algorithm) we understand the largest among its branching factors.
We refer to \cite{Fomin:2010:EEA:1941886} for a more detailed explanation of these aspects.

In our work we also use the following folklore result.

\begin{lemma}\label{lemma:subsets}
For any positive integer $n$ and any $\alpha$ such that $0 < \alpha \le \frac{1}{2}$, we have $\sum\limits_{i=0}^{\lfloor \alpha n \rfloor} \binom{n}{i} \le 2^{H(\alpha)n},$
where $H(\alpha)=-\alpha \log_2(\alpha)-(1-\alpha)\log_2(1-\alpha)$.
\end{lemma}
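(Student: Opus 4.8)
The plan is to derive the bound from the binomial theorem applied to $1 = \bigl(\alpha + (1-\alpha)\bigr)^n$, using the full expansion as a source of a uniform lower bound on a single generic term. First I would write
\[
1 = \bigl(\alpha + (1-\alpha)\bigr)^n = \sum_{i=0}^{n} \binom{n}{i} \alpha^i (1-\alpha)^{n-i},
\]
and then discard every term with $i > \lfloor \alpha n \rfloor$; since all summands are nonnegative this only decreases the right-hand side, leaving
\[
1 \ge \sum_{i=0}^{\lfloor \alpha n \rfloor} \binom{n}{i}\, \alpha^i (1-\alpha)^{n-i}.
\]

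The key step is to bound each surviving summand below by one and the same quantity, so that the binomial coefficients can be factored out of the sum. I would rewrite $\alpha^i (1-\alpha)^{n-i} = (1-\alpha)^n \left(\frac{\alpha}{1-\alpha}\right)^i$ and invoke the hypothesis $\alpha \le \frac{1}{2}$, which gives $\frac{\alpha}{1-\alpha} \le 1$, so that the factor $\left(\frac{\alpha}{1-\alpha}\right)^i$ is nonincreasing in $i$. Hence for every $i$ in the range $0 \le i \le \lfloor \alpha n \rfloor \le \alpha n$ we have $\left(\frac{\alpha}{1-\alpha}\right)^i \ge \left(\frac{\alpha}{1-\alpha}\right)^{\alpha n}$, and therefore $\alpha^i(1-\alpha)^{n-i} \ge \alpha^{\alpha n}(1-\alpha)^{(1-\alpha)n}$ uniformly over the summation range.

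It then remains to identify this lower bound with $2^{-H(\alpha)n}$. Taking base-$2$ logarithms, $\log_2\bigl(\alpha^{\alpha n}(1-\alpha)^{(1-\alpha)n}\bigr) = n\bigl(\alpha\log_2\alpha + (1-\alpha)\log_2(1-\alpha)\bigr) = -H(\alpha)n$ by the definition of $H$. Substituting the uniform lower bound into the previous display and pulling the constant $2^{-H(\alpha)n}$ out of the sum yields $1 \ge 2^{-H(\alpha)n} \sum_{i=0}^{\lfloor \alpha n \rfloor} \binom{n}{i}$, and multiplying through by $2^{H(\alpha)n}$ gives the claim. I do not expect a genuine obstacle in this argument, as it is a standard entropy estimate; the only point requiring attention is the role of $\alpha \le \frac{1}{2}$, which is precisely what makes the geometric factor monotone and thus legitimizes evaluating it at the endpoint $i = \alpha n$ to obtain a valid lower bound across the whole range.
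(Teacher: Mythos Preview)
Your argument is correct and is the standard proof of this entropy bound. The paper itself does not supply a proof at all: it introduces the lemma as a ``folklore result'' and states it without justification, so there is no approach in the paper to compare against.
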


\subsection{Minimal partial vertex covers}

\begin{definition}
Let $G$ be a graph. We call a subset $S \subseteq V(G)$ of its vertices a \textit{$T$-partial vertex cover of $G$} for some $T \subseteq E(G)$, if the set of edges covered by vertices in $S$ is exactly $T$, i.e.
$ T = \{ uv :\{u,v\} \cap S \neq \emptyset, uv \in E(G) \}.$

We call a $T$-partial vertex cover $S$ of $G$ a \textit{minimal partial vertex cover of $G$} if there is no $T$-partial vertex cover $S'$ of $G$ with $S' \subsetneq S$. Equivalently, there is no vertex $v \in S$ so that $S\setminus \{v\}$ is a $T$-partial vertex cover of $G$.
\end{definition}

The following theorem bounds the number of minimal partial vertex covers in graphs of bounded degree. We note that somewhat similar results were proven by Bj\"{o}rklund et al.~\cite{Bjrklund2012}.

\begin{theorem}\label{theorem:pvcf}
For any positive integer $t$, there is a constant $\omega_t < 1$ and an algorithm that, given an $n$-vertex graph $G$ with $\Delta(G) < t$ as input, outputs all minimal partial vertex covers of $G$ in $\Ostar{2^{\omega_tn}}$ time.
\end{theorem}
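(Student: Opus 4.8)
The plan is to reduce the enumeration of minimal partial vertex covers to the enumeration of dominating sets, and then to bound the number of dominating sets of a bounded-degree graph.

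First I would establish a clean local characterization of minimality. A vertex $v\in S$ is removable from a $T$-partial vertex cover exactly when every edge incident to $v$ remains covered by $S\setminus\{v\}$, i.e.\ when $N(v)\subseteq S$. Hence $S$ is a minimal partial vertex cover if and only if every $v\in S$ has a neighbour outside $S$. Writing $\overline{S}=V(G)\setminus S$, this condition says exactly that every vertex not in $\overline S$ has a neighbour in $\overline S$, which is precisely the statement that $\overline S$ is a dominating set of $G$. Thus complementation $S\mapsto V(G)\setminus S$ is a bijection between the minimal partial vertex covers of $G$ and the dominating sets of $G$. In particular it suffices to (i) show that $G$ has at most $2^{\omega_t n}$ dominating sets for some $\omega_t<1$ depending only on $t$, and (ii) enumerate all dominating sets within that time; the theorem then follows by outputting the complements.

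For (i), the key point is to obtain an \emph{exponential} saving rather than a mere constant-factor one. I would greedily extract a large family of closed neighbourhoods with pairwise disjoint supports: repeatedly pick an undeleted vertex $v$, add $v$ to the family, and delete $v$ together with every vertex at distance at most $2$ from it. Since $\Delta(G)<t$, each such ball contains at most $1+(t-1)+(t-1)(t-2)\le t^2$ vertices, so this yields vertices $v_1,\dots,v_p$ with $p\ge \lceil n/t^2\rceil$ whose closed neighbourhoods $N[v_1],\dots,N[v_p]$ are pairwise disjoint (any two chosen vertices are at distance at least $3$). Now let $D$ be a uniformly random subset of $V(G)$. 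Because the sets $N[v_i]$ are disjoint, the events $E_i=\{D\cap N[v_i]=\emptyset\}$ are mutually independent, and $\Pr[E_i]=2^{-|N[v_i]|}\ge 2^{-t}$. If $D$ is dominating then every $v_i$ is dominated, so none of the $E_i$ occurs; hence
\[
\Pr[D\text{ is dominating}]\le \prod_{i=1}^{p}\bigl(1-2^{-|N[v_i]|}\bigr)\le (1-2^{-t})^{p}\le (1-2^{-t})^{n/t^2}.
\]
Multiplying by $2^n$ gives at most $2^{\omega_t n}$ dominating sets with $\omega_t=1+\tfrac{1}{t^2}\log_2(1-2^{-t})<1$, which by the bijection bounds the number of minimal partial vertex covers as well.

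For (ii), I would enumerate dominating sets by a straightforward branching procedure maintaining a partial assignment of the vertices to $D$ or $\overline D$: as long as some vertex $v$ is not yet dominated, pick one and branch over the $2^{|N[v]\cap U|}-1$ nonempty choices of which still-undecided vertices of $N[v]$ to place into $D$ (placing the remaining undecided vertices of $N[v]$ outside $D$), where $U$ is the current set of undecided vertices; once every vertex is dominated, output all completions of the undecided vertices. Each leaf is a full assignment that is a dominating set, distinct leaves give distinct sets, and no dominating set is ever pruned, so the leaves are in bijection with the dominating sets of $G$. As the recursion depth is at most $n$ and each node costs polynomial time, the running time is $\Ostar{\#\text{dominating sets}}=\Ostar{2^{\omega_t n}}$, and complementing each output yields all minimal partial vertex covers in the same time.

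The main obstacle is step (i): the naive observation that a fixed vertex fails to be dominated with probability $2^{-|N[v]|}$ only shows a constant fraction of subsets are non-dominating, which is far too weak. The disjoint-ball packing is exactly what upgrades this to $\Omega(n)$ mutually independent failure events and hence to a genuinely exponential saving. I expect the resulting $\omega_t$ to be far from optimal --- the disjoint-clique example suggests $\tfrac1t\log_2(2^t-1)$ is the true value, attainable by a more careful analysis --- but since the theorem only asks for some $\omega_t<1$, the packing bound suffices.
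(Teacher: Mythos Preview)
Your approach is correct in its essentials and genuinely different from the paper's. The key observation---that $S$ is a minimal partial vertex cover if and only if $V(G)\setminus S$ is a dominating set---is a clean reformulation that the paper uses only implicitly (it notes that a minimal partial vertex cover cannot contain any $N[v]$, but never names the complement as a dominating set). Your probabilistic counting via a packing of pairwise disjoint closed neighbourhoods is elegant and, pleasantly, yields exactly the same constant as the paper: their $\omega_t=\frac{1}{t^2}\log_2(2^t-1)+\frac{t-1}{t}$ equals your $1+\frac{1}{t^2}\log_2(1-2^{-t})$. The paper instead runs a direct branching algorithm: it picks a vertex $v$ whose entire closed neighbourhood is still undecided, branches over the $2^{|N[v]|}-1$ proper subsets of $N[v]$ to place in $S$, and once no such $v$ exists observes that the decided part is a dominating set of size at least $n/t$, whence the undecided part has at most $(t-1)n/t$ vertices and can be brute-forced. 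Your route separates the counting from the enumeration, which is conceptually cleaner; the paper's route is more self-contained.

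There is one genuine but easily repaired gap in your enumeration step. The claim that ``the leaves are in bijection with the dominating sets'' is false as stated: your adaptive branching can produce dead leaves where some vertex $v$ is undominated yet $N[v]\cap U=\emptyset$ (for instance, on the path $a\text{--}b\text{--}c$, branching first on $b$ with $R=\{a\}$ forces $b,c\notin D$ and leaves $c$ permanently undominated). Such dead leaves correspond to no dominating set, so bounding the number of recursion-tree nodes by $\Ostar{\#\text{dominating sets}}$ is not justified. The simplest fix is to reuse your packing for the enumeration itself: iterate over all tuples $(R_1,\dots,R_p)$ with $\emptyset\neq R_i\subseteq N[v_i]$ and all subsets $S$ of $V(G)\setminus\bigcup_i N[v_i]$, set $D=\bigl(\bigcup_i R_i\bigr)\cup S$, and output $\overline{D}$ whenever $D$ is dominating. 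The number of iterations is exactly $2^n\prod_i(1-2^{-|N[v_i]|})\le 2^{\omega_t n}$, matching your counting bound. Alternatively, one can keep your adaptive branching and bound the number of recursion-tree leaves directly by the branching factor $(2^t-1)^{n/t}\le 2^{\omega_t n}$, independent of the dominating-set count.
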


\begin{proof}
We present a recursive branching algorithm that lists all  minimal partial vertex covers of $G$. Pseudocode of the algorithm is presented in Figure~\ref{fig:familiess}. As input, the algorithm takes three sets $F,A,Z$ such that
 $F \sqcup A \sqcup Z = V(G)$. The purpose of the algorithm is to enumerate all minimal partial vertex covers that contain $A$ as a subset and do not intersect with $Z$. So the algorithm outputs all minimal partial vertex covers $S$ of $G$ satisfying $S \cap (A\sqcup Z)=A$. It easy to see that then $\texttt{minimal\_pvcs}(G,V(G),\emptyset,\emptyset)$ enumerates all minimal partial vertex covers of $G$.

\begin{figure}[!ht]
\centering
\begin{algorithm}[H]
 \TitleOfAlgo{$\texttt{minimal\_pvcs}(G, F, A, Z)$}
 \KwIn{Graph $G$ with $\Delta(G) < t$, vertex subsets $F,A,Z$ such that $F \sqcup A \sqcup Z = V(G)$.}
 \KwOut{All minimal partial vertex covers $S$ of $G$ such that $S \cap (A \sqcup Z)=A$.}
 \BlankLine
 \SetAlgoVlined
 \DontPrintSemicolon
 \eIf{$\exists v: N[v] \subseteq F$}{
    \ForEach{$R \subsetneq N[v]$}{
        $\texttt{minimal\_pvcs}(G, F \setminus N[v], A \sqcup R, Z \sqcup (N[v] \setminus R))$\;
    }
 }{
    \ForEach{$R \subseteq F$}{
        \If{$A \sqcup R$ is a minimal partial vertex cover of $G$}{
        output $A\sqcup R$\;
        }
    }
 }
 
\end{algorithm}
\caption{Algorithm enumerating all minimal partial vertex covers of a graph.}
\label{fig:familiess}
\end{figure}

 The algorithm uses only the following branching rule. If there is a vertex $v \in F$ such that $N(v) \subseteq F$ then  consider  $2^{|N[v]|}-1$ branches. In each branch, take some $R\subsetneq N[v]$ and run $\texttt{minimal\_pvcs}(G, F \setminus N[v], A \sqcup R, Z \sqcup (N[v] \setminus R))$. In other words, we branch on which vertices in $N[v]$ belong to minimal partial vertex cover and which do not. Note that if $S$ is a minimal partial vertex cover then it cannot contain $N[v]$, since otherwise $S \setminus \{v\}$ is its proper subset and covers the same edges. Hence, above branching consider all possible cases. Since $\Delta(G)<t$, the worst branching factor is $(2^{t}-1)^{\frac{1}{t}}$. 

If the branching rule cannot be applied then we apply brute-force on all possible variants of the intersection of the minimal partial vertex cover $S$ and the set $F$.  So we consider all $2^{|F|}$ variants of $S \cap F$, and filter out variants that do not correspond to a minimal partial vertex cover. Minimality of a partial vertex cover can be checked in polynomial time, so filtering out adds only a polynomial factor.

Note that we run brute-force only if every vertex in $F$ has at least one neighbour in $A \sqcup Z$, in other words, $A \sqcup Z$ is a dominating set of $G$. Since $\Delta(G)<t$, any dominating set of $G$ consists of at least $\frac{n}{t}$ vertices. Hence, $|F| \le \frac{(t-1)n}{t}$. This leads to the following upper bound on the running time of the algorithm:
\arxivmath \left(\left(2^{t}-1\right)^\frac{1}{t}\right)^{\frac{n}{t}}  
\cdot 2^\frac{(t-1)n}{t} \cdot n^{\O(1)}.
\arxivmath

Hence, we can put $\omega_t=\frac{1}{t^2}\log\left(2^t-1\right)+\frac{t-1}{t}<1.$
\end{proof}
 
\section{Algorithms for bounded thresholds}

\subsection{Algorithm for thresholds bounded by fixed constant}\label{sec:algo}

In this subsection we prove the following theorem.

\begin{theorem}
Let $t$ be a fixed constant. For \textsc{Target Set Selection} with all thresholds bounded by $t$ there is a $\Ostar{(2-\epsilon_t)^n}$-time algorithm, where $\epsilon_t$ is a positive constant that depends only on $t$.
\end{theorem}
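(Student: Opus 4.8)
The plan is to reduce the search for a target set from all $2^n$ subsets to an enumeration of \emph{minimal partial vertex covers}, which Theorem~\ref{theorem:pvcf} lists quickly in bounded-degree graphs, and to remove high-degree vertices separately by exploiting that every threshold is at most $t$. First I would dispose of the trivial case: if $k \ge \ell$, then activating any $\ell$ vertices is already a solution, so I may assume $k < \ell$.

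The central structural observation is that some minimum-size target set is a minimal partial vertex cover. Take a target set $X$ with $|\mathcal{S}(X)| \ge \ell$ of smallest possible size, and suppose some $v \in X$ satisfies $N[v] \subseteq X$ and $\deg(v) \ge \thr(v)$. Then all neighbours of $v$ lie in $X \setminus \{v\}$, so $v$ is activated in the first round of the process started from $X \setminus \{v\}$; hence $\mathcal{S}(X \setminus \{v\}) = \mathcal{S}(X)$, contradicting minimality. Thus, apart from the \emph{forced} vertices with $\deg(v) < \thr(v)$ (all of which have degree below $t$), the optimal $X$ contains no closed neighbourhood entirely, which is precisely the defining property of a minimal partial vertex cover; equivalently, $V(G) \setminus X$ is a dominating set. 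Consequently the algorithm would test candidate target sets by computing $\mathcal{S}(X)$ in polynomial time and checking $|X| \le k$ and $|\mathcal{S}(X)| \ge \ell$, ranging only over minimal partial vertex covers. When $\Delta(G) < D$ for a suitable constant $D = D(t)$, Theorem~\ref{theorem:pvcf} enumerates all of them in $\Ostar{2^{\omega_D n}}$ time with $\omega_D < 1$, which already yields the claimed bound. Forced vertices, having degree below $t \le D$, are accommodated by a minor modification of the enumeration of Figure~\ref{fig:familiess} that also allows a full closed neighbourhood to be selected at such vertices.

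It remains to handle high-degree vertices. Here I would branch exploiting $\thr \le t$: if a vertex $v$ with $\deg(v) \ge D$ is \emph{not} activated, then fewer than $t$ of its neighbours are ever activated, so at least $D - t + 1$ of them stay inactive; branching on the at most $t-1$ activated neighbours (a polynomial number of choices since $t$ is constant) fixes $\Omega(D)$ vertices as inactive simultaneously, giving a very favourable branching vector. The complementary case is that $v$ is activated, and the delicate subcase is when $v$ is activated by the process rather than chosen into $X$, since then only a single vertex is resolved. To keep the overall base below $2$, I would balance this against the budget using Lemma~\ref{lemma:subsets}: when the target set is small ($k \le \alpha n$) all candidate sets of size at most $k$ are enumerated directly in $\Ostar{2^{H(\alpha) n}}$ time, while when it is large the many activations force either a large inactive set or large budget expenditure, each of which can be enumerated below $2^n$.

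The main obstacle is exactly this high-degree regime. Knowing which vertices are ultimately activated does not by itself reveal a small seed set, because the activation order matters — finding the minimum seed set of the activated subgraph is itself an instance of \textsc{Perfect Target Set Selection} — and the minimal-partial-vertex-cover viewpoint circumvents this only while degrees remain bounded, since in dense graphs, already in a clique, there are nearly $2^n$ minimal partial vertex covers. The crux is therefore to design the high-degree branching so that its worst branching factor is strictly below $2$ uniformly in $t$, and to tune the constant $D$ together with the threshold $\alpha$ so that the branching, the subset-counting bound of Lemma~\ref{lemma:subsets}, and the bounded-degree enumeration of Theorem~\ref{theorem:pvcf} all stay below $2^n$ at once; carrying the forced vertices through this case analysis is the remaining technical nuisance.
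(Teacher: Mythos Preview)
Your structural observation --- that a minimum target set contains no closed neighbourhood and is therefore (forced vertices aside) a minimal partial vertex cover --- is correct, and when $\Delta(G)$ happens to be bounded it does give an $\Ostar{2^{\omega_D n}}$ algorithm via Theorem~\ref{theorem:pvcf}. But the theorem assumes nothing about $\Delta(G)$, and your high-degree branching is where the proposal breaks.

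Your branching on a high-degree vertex $v$ has, besides the favourable ``$v$ never activated'' case, two branches each of which resolves a single vertex: $v\in X$, and $v\notin X$ but eventually activated. Any branching vector containing two entries equal to $1$ already forces the branching factor to be at least $2$; the polynomially many Case~A branches only push it strictly above $2$. The balancing with Lemma~\ref{lemma:subsets} does not repair this: when $k\approx n/2$ neither enumerating size-$\le k$ subsets nor their complements beats $2^n$, and nothing bounds how often the bad ``activated but not in $X$'' branch is taken. You flag this yourself in the final paragraph, but no rule with factor below $2$ is actually supplied.

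The paper's fix is a different branching (Branching~Rule~\ref{br1}): it never asks whether $v$ is \emph{activated}, only whether $v$ and $\thr(v)$ chosen $F$-neighbours $T$ lie in $X$. Among the $2^{\thr(v)+1}$ ways to split $T\cup\{v\}$ between $A$ and $Z$, every split that puts at least $\thr(v)$ vertices into $A$ is subsumed by the single split $T\to A$, $v\to Z$ (since that already activates all of $T\cup\{v\}$), leaving $2^{\thr(v)+1}-\thr(v)-1$ branches on $\thr(v)+1$ vertices and hence factor $(2^{t+1}-t-1)^{1/(t+1)}<2$. After this one only has $\Delta(G[F])<t$; high-degree vertices remain in $A\sqcup Z$, and the optimal $B\subseteq F$ need \emph{not} be a minimal partial vertex cover of $G[F]$ (a vertex of $B$ may have all its $F$-neighbours in $B$ as long as it has a neighbour in $Z$). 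Accordingly the paper does not enumerate candidate target sets at this point; it enumerates minimal partial vertex covers $C\subseteq B$ of $G[F]$ only to fix which \emph{edges} of $G[F]$ the unknown $B$ covers --- enough to simulate the activation process up to the residual unknowns $\min\{\thr(v),|N(v)\cap B|\}$ for $v\in Z$ --- and then finds $B$ by dynamic programming over those unknowns (Stages~II and~III). Both ideas, the membership-based branching on $t+1$ vertices and the decoupling of edge-cover information inside $F$ from the interaction with $Z$, are missing from your outline.
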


Our algorithm consists of three main stages. In the first stage we apply some simple reduction and branching rules. If the instance becomes small enough we then apply brute-force and solve the problem. Otherwise, we move to the second stage of the algorithm. In the second stage we perform branching rules that help us describe the activation process. After that we move to the third stage in which we run special dynamic program that finally solves the problem for each branch. Let us start the description of the algorithm.

 \subsubsection{Stage I}
 
In the first stage our algorithm applies some branching rules. In each branch we maintain  the following partition of $V(G)$ into three parts $A, Z, F$. These parts have the following meaning: $A$ is the set of vertices that are known to be in our target set, $Z$ --- the set of vertices that are known to be not in the target set, $F$ --- the set of all other vertices (i.e.\ vertices about that we do not know any information so far). At the beginning, we have $A=Z=\emptyset$ and $F=V(G)$.

We start the first stage with exhaustive application of reduction rule \ref{rr1} and branching rule \ref{br1}.

\begin{rrule} \label{rr1}
If there is any vertex $v \in \mathcal{S}(A)$, but $v \notin A \sqcup Z$, then assign $v$ to $Z$.
\end{rrule}

Reduction rule~\ref{rr1} is correct as there is no need to put a vertex in a target set if it will become activated eventually by the influence of its neighbours.

\begin{brule} \label{br1} If  there is a vertex $v \in F$ such that  $\deg_F(v)\geq \thr(v)$ then arbitrarily  choose a subset $T \subseteq N(v) \cap F$ such that $|T|=\thr(v)$ and branch on the following branches:
\begin{enumerate}
    \item For each subset of vertices $S \subseteq T\cup \{v\}$ of size less than $\thr(v)$ consider a branch in which we put $S$ into $A$ and we put other vertices $T \cup \{v\} \setminus S$ into $Z$;
    \item Additionally consider the branch in which we assign all vertices from $T$ to $A$ and $v$ is assigned to $Z$.
\end{enumerate}
\end{brule}

It is enough to consider only above-mentioned branches. All other possible branches assign at least $\thr(v)$ vertices from $T \cup \{v\}$ to $A$, and we always can replace such branch with the branch assigning $T$ to $A$, since it leads to the activation of all vertices in $T\cup \{v\}$ and adds at most the same number of vertices into a target set. Branching rule~\ref{br1} considers $2^{\thr(v)+1}-\thr(v)-1$ options for $\thr(v)+1$ vertices, thus it gives the biggest branching factor of $(2^{t+1}-t-1)^{\frac{1}{t+1}}$ (here and below $t=\max_{v\in V(G)}\thr(v)$).

\begin{brule} \label{br2} If $|F| \leq \gamma n$, where $\gamma$ is a constant to be chosen later, then simply apply brute-force on how vertices in $F$ should be assigned to $A$ and $Z$. 
\end{brule}

If branching rule~\ref{br2}  is applied in all branches then the running time of the whole algorithm is at most $2^{\gamma n} (2^{t+1}-t-1)^{ \frac{(1-\gamma)n}{t+1} }$ and we do not need to use stages II and III, as the problem is already solved in this case.

\subsubsection{Stage II}

After exhaustive application of reduction rule~\ref{rr1} and branching rules~\ref{br1} and \ref{br2}, in each branch we either know the answer or we have the following properties:
\begin{enumerate}
\item $\Delta(G[F])<t$;
\item $|F|>\gamma n$;
\item $\mathcal{S}(A)\subseteq A\sqcup Z$.
\end{enumerate}

Now, in order to solve the problem it is left to identify the vertices of a target set that belong to $F$. It is too expensive to consider all $2^{|F|}$ subsets of $F$ as $F$ is too big. Instead of this direct approach (brute-force on all subsets of $F$) we consider several subbranches. In each such branch we almost completely describe the activation process of the graph. For each branch, knowing this information about the activation process, we find an appropriate target set by solving a special dynamic program in stage III.

Let $X$ be an answer (a target set). $X$ can be expressed as $X=A\sqcup B$ where $B \subseteq F$.
 At the beginning of the activation process only vertices in $\mathcal{S}_0(X)=X=A\sqcup B$ are activated, after the first round vertices in $\mathcal{S}_1(A\sqcup B)$ are activated, and so on. It is clear that $\mathcal{S}(A\sqcup B)=\mathcal{S}_n(A\sqcup B)$. Unfortunately, we cannot compute the sequence of $\mathcal{S}_i(A\sqcup B)$ as we do not know $B$. Instead we compute the sequence $P_0, P_1, \dots, P_n=P$ such that $P_i\setminus B=\mathcal{S}_i(X)\setminus B$ and $P_i\subseteq P_{i+1}$ for any $i$.

First of all, using Theorem~\ref{theorem:pvcf} we list all minimal partial vertex covers of the graph $G[F]$. For each minimal partial vertex cover $C$ we create a branch that indicates that $C\subseteq B$ and, moreover, $C$ covers exactly the same edges in $G[F]$ as $B$ does. In other words, any edge in $G[F]$ has at least one endpoint in $B$ if and only if it has at least one endpoint in $C$. Note that such $C$ exists for any $B$. One can obtain $C$ by removing vertices from $B$ one by one while it covers the same edges as $B$. When no vertex can be removed, then, by definition, the remaining vertices form a minimal partial vertex cover.

Put $P_0=A\sqcup C$. It is correct since $\mathcal{S}_0(X)\setminus B=A=P_0\setminus B$. We now show how to find $P_{i+1}$ having $P_i$. Recall that to do such transition from $\mathcal{S}_{i}(X)$ to $\mathcal{S}_{i+1}(X)$ it is enough to find vertices with the number of neighbours in $\mathcal{S}_i(X)$ being at least the threshold value of that vertex. As for $P_i$ and $P_{i+1}$, it is sufficient to check that the number of activated neighbours has reached the threshold only for vertices that are not in $B$. Thus any transition from $P_i$ to $P_{i+1}$ can be done by using a procedure that, given $P_i$ and any vertex $v \notin P_i$, checks whether $v$ becomes activated in the $(i+1)^{\text{th}}$ round or not, under the assumption that $v \notin B$.

Given $P_i$ it is not always possible to find a unique $P_{i+1}$ as we do not know $B$. That is why in such cases we create several subbranches that indicate potential values of $P_{i+1}$.  

Let us now show how to, for each vertex $v  \notin P_{i}$, figure out whether $v$ is in $P_{i+1}$ (see pseudocode in Figure~\ref{fig:is_activated_1}). Since we know $P_i$ and $P_i \subseteq P_{i+1}$, we assume that $v \notin P_i$.

If $|N(v) \cap P_i| \geq \thr(v)$ then we simply include $v$ in $P_{i+1}$. We claim that this check is enough for $v \in F$.

\begin{claim}
 If $v\in F\setminus B$, then $v$ becomes activated in the $i^\text{th}$ round if and only if $|N(v) \cap P_i| \geq \thr(v)$.
\end{claim}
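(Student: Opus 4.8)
The plan is to reduce the two-sided implication to a single set equality: for $v\in F\setminus B$ I will show
\[
N(v)\cap P_i \;=\; N(v)\cap \mathcal{S}_i(X),
\]
and hence $|N(v)\cap P_i| = |N(v)\cap \mathcal{S}_i(X)|$. Since, by the definition of the activation process, the event that $v$ becomes newly activated is governed precisely by comparing $|N(v)\cap \mathcal{S}_i(X)|$ against $\thr(v)$, this equality makes the $P$-based test $|N(v)\cap P_i|\ge \thr(v)$ detect exactly the same event, giving both directions of the equivalence at once. To prove the set equality I will use the invariant maintained along the construction of the sequence, namely $P_i\setminus B = \mathcal{S}_i(X)\setminus B$, which I may assume at the current index by the inductive manner in which the $P_j$ are built. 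This invariant says that $P_i$ and $\mathcal{S}_i(X)$ differ only inside $B$, so it suffices to show that the two sets agree on $N(v)\cap B$; more precisely, that every neighbour of $v$ lying in $B$ already belongs to both $P_i$ and $\mathcal{S}_i(X)$.

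The heart of the argument is to control $N(v)\cap B$, and this is where the choice of $C$ enters. Fix a neighbour $u\in N(v)\cap B$. Because $B\subseteq F$ and $v\in F$, the edge $uv$ lies in $G[F]$ and is covered by $B$ through $u$. By the defining property of the branch, $C$ covers exactly the same edges of $G[F]$ as $B$, so $uv$ is covered by $C$ as well; since $v\in F\setminus B$ and $C\subseteq B$, the endpoint $v$ is not in $C$, which forces $u\in C$. Thus $N(v)\cap B\subseteq C$. Now $C\subseteq P_0\subseteq P_i$ gives $N(v)\cap B\subseteq P_i$, while $B\subseteq X=\mathcal{S}_0(X)\subseteq \mathcal{S}_i(X)$ gives $N(v)\cap B\subseteq \mathcal{S}_i(X)$. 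Hence all neighbours of $v$ in $B$ lie in both sets; combined with the agreement of $P_i$ and $\mathcal{S}_i(X)$ outside $B$ supplied by the invariant, this yields $N(v)\cap P_i = N(v)\cap \mathcal{S}_i(X)$, completing the reduction.

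I expect the only genuinely delicate point to be this partial-vertex-cover step: it is exactly where the coverage property of $C$ is used, and it hinges on the two facts that $v\notin C$ (so that covering the edge $uv$ must be achieved through $u$) and that $C$ and $B$ cover the same $F$-edges. Everything else is bookkeeping with the invariant and with the monotonicity $C\subseteq P_0\subseteq P_i$ and $B\subseteq \mathcal{S}_i(X)$. One should also take care with the round indexing, matching $P_i$ to the correct stage $\mathcal{S}_i(X)$ of the true process; but once the set equality above is in hand, both directions of the stated equivalence follow directly, and the invariant $P_{i+1}\setminus B = \mathcal{S}_{i+1}(X)\setminus B$ is thereby preserved for the vertices of $F\setminus B$.
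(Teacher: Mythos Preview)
Your proof is correct and follows essentially the same route as the paper: both reduce the claim to the set equality $N(v)\cap P_i = N(v)\cap \mathcal{S}_i(X)$, handle the part outside $B$ via the invariant $P_i\setminus B=\mathcal{S}_i(X)\setminus B$, and handle the part inside $B$ by the partial-vertex-cover argument that any $u\in N(v)\cap B$ must lie in $C$ (using $v\notin C$ and that $C$ and $B$ cover the same $G[F]$-edges). The only cosmetic difference is that the paper phrases the key step as the equality $N(v)\cap C = N(v)\cap B$, whereas you prove just the inclusion $N(v)\cap B\subseteq C$, which is all that is needed.
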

\begin{proof}
We show that by proving that $\mathcal{S}_i(X)\cap N(v)=P_i\cap N(v)$ for every $v \in F\setminus B$. Note that $\mathcal{S}_i(X) \setminus B=P_i \setminus B$ by definition of $P_i$. So it is enough to prove that $\mathcal{S}_i(X)\cap N(v)\cap B= P_i \cap N(v)\cap B$, which is equivalent to $N(v) \cap B=P_i \cap N(v)\cap B$, as $B \subseteq \mathcal{S}_i(X)$. Since $v \notin B$, then any $uv \in E(G[F])$ is covered by $B$ if and only if $u \in B$. $C$ covers the same edges in $G[F]$ as $B$ does, and also $v\notin C$, hence $C \cap N(v)=B \cap N(v)$. Thus, since $C \subseteq P_0 \subseteq P_i$, we get $P_i \cap B \cap N(v)=P_i \cap C \cap N(v)=C \cap N(v) =B\cap N(v)$. 
\end{proof}

If $v\in B$, the decision for $v$ does not matter.
Thus if $v \in F$ and $|N(v) \cap P_i|<\thr(v)$, we may simply not include $v$ in $P_{i+1}$.

If $v \in Z$, at this point, we cannot compute the number of activated neighbours of $v$ exactly as we do not know what neighbours of $v$ are in $B$. Note that we do not need the exact number of such neighbours if we know that this value is at least $\thr(v)$. Thus we branch into $\thr(v)+1$ subbranches corresponding to the value of $\min\{|N(v)\cap B|, \thr(v)\}$, from now  on we denote this value as $dg(v)$.

On the other hand, we know all activated neighbours of $v$ that are in $V(G) \setminus F$ since $\mathcal{S}_i(X) \cap (V(G) \setminus F) = P_i \cap (V(G)\setminus F)$, as $B\subseteq F$. Let this number be $m=|N(v) \cap (P_i \setminus F)|$. So the number of activated neighbours of $v$ is at least $m+dg(v)$. Also there may be some activated neighbours of $v$ in $N(v)\cap P_i \cap F$. However, we cannot simply add $|N(v)\cap P_i \cap F|$ to $m+dg(v)$ since vertices in $P_i\cap B$ will be computed twice. So we are actually interested in the value of $|(N(v) \cap P_i \cap F) \setminus B|$. That is why for vertices from $N(v)\cap P_i \cap F$ we simply branch whether they are in $B$ or not. After that we compare $m+dg(v) + |(N(v)\cap P_i \cap F) \setminus B|$ with $\thr(v)$ and figure out whether $v$ becomes activated in the current round or not.

Note that once we branch on the value of $\min\{|N(v)\cap B|, \thr(v)\}$, or on whether $v \in B$ or not for some $v$, we will not branch on the same value or make a decision for the same vertex again as it makes no sense. Once fixed, the decision should not change along the whole branch and all of its subbranches, otherwise the information about $B$ would just become inconsistent.

Let us now bound the number of branches created. There are three types of branchings in the second stage:
\begin{enumerate}
\item Branching on the value of the minimal partial vertex cover $C$. By Theorem~\ref{theorem:pvcf}, there is at most $\Ostar{2^{\omega_t |F|}}$ such branches.
\item Branching on the value of $dg(v)=\min\{|N(v)\cap B|, \thr(v)\}$ with $v \in Z$. There is at most $(t+1)^{|Z |}$ such possibilities since  $t \geq \min\{|N(v)\cap B|, \thr(v)\} \geq 0$.
\item Branching on whether vertex $u$ is in $B$ or not.  We perform this branching only for vertices in the set $N(v)\cap P_i \cap F$ with $v \in Z$ only when its size is strictly smaller than $\thr(v) \le t$. Hence we perform a branching of this type on at most $(t-1)|Z|$ vertices.
\end{enumerate}

Hence, the total number of the branches created in stage II is at most \arxivmath 2^{\omega_t |F|} \cdot  (t+1)^{|Z|} \cdot 2^{(t-1)|Z|}\cdot n^{\O(1)}. \arxivmath

\begin{figure}[!ht]
\centering
\begin{algorithm}[H]
    \TitleOfAlgo{$\texttt{is\_activated}(G, \mathrm{thr}, A, Z, F, P_i, v)$}
    \KwIn{$G, \mathrm{thr}, A, Z, F$ as usual, $P_i$ such that $P_i \setminus B = \mathcal{S}_i(A\sqcup B) \setminus B$ for some $B$, and a vertex $v \notin P_i$.}
    \KwOut{True, if $v \notin B$ and $v \in \mathcal{S}_{i+1}(A \sqcup B)$; \newline False, if $v \notin B$ and $v \notin \mathcal{S}_{i+1}(A \sqcup B)$; \newline
    any answer, otherwise.}
    \BlankLine
    \SetAlgoVlined
    \DontPrintSemicolon
    \uIf{$|N(v) \cap P_i| \ge \thr(v)$}{
        \Return{True}\;
    }
    \ElseIf{$v \in F$}{
        \Return{False}\;
    }
    $m \longleftarrow |N(v) \cap (P_i \setminus F)|$\;
    branch on the value of $dg(v)=\min\left\{|N(v) \cap B|, \thr(v)\right\}$\;
    $m \longleftarrow m + dg(v)$\;
        \ForEach{$u \in P_i \cap N(v) \cap F$}{
            branch on whether $u \in B$\;
            \If{$u \notin B$}{
                $m \longleftarrow m + 1$\;
            }
        }
    \Return $m \ge \mathrm{thr}(v)$
\end{algorithm}
    \caption{Procedure determining whether a vertex becomes activated in the current round.}
    \label{fig:is_activated_1}
\end{figure}

\subsubsection{Stage III}
Now, for each branch our goal is to find the smallest set $X$ which activates at least $\ell$ vertices and agrees with all information obtained during branching in a particular branch. That is,
\begin{itemize}
\item $A\subseteq X, Z\cap X=\emptyset$ (branchings made in stage I);
\item $C\subseteq X$ (branching of the first type in stage II);
\item information about $\min\{|N(v)\cap B|,\thr(v)\}$ (second type branchings in stage II);
\item additional information whether certain vertices belong to $X$ or not (third type branchings in stage II).
\end{itemize}
  
  From now on we assume that we are considering some particular branching leaf. Let $A'$ be the set of vertices that are known to be in $X$ for a given branch and $Z'$ be the set of vertices known to be not in $X$ (note that $A\subseteq A'$ and $Z\subseteq Z'$). Let $Z=\{v_1,v_2,\dots, v_z\}$ and $F'=V(G)\setminus A' \setminus Z'=\{u_1,u_2,\dots, u_{f'}\}$. So actually  it is left  to find $B'\subseteq F'$ (in these new terms, $B=(A'\setminus A) \sqcup B'$) such that $|A'\sqcup B'|\leq k$, $|P\cup A'\cup B'| \geq \ell$ and for each $i \in \{1,2,\dots,z\}$ the value $\min\{\thr(v_i), |N(v_i)\cap B|\} $ equals $dg(v_i)$. This is true since the information obtained during branching completely determines the value of $P$.  

In order to solve the obtained problem we employ dynamic programming. We create a table $TS$ of size $ f' \times \ell \times (t+1)^z$. For all $B'_1$ such that $|(B'_1 \cup P) \cap \{u_1, u_2, \dots, u_i\}|=p$ and $\min\{\thr(v_j), |N(v_j)\cap ((A' \setminus A) \sqcup B'_1)|\}=d_j$, in the field $TS(i,p, d_1, d_2, \ldots, d_z)$  we store any set $B'_2$ of minimum size such that $A' \sqcup B'_1 \sqcup B'_2$ is a potential solution, i.e.\ $|\mathcal{S}(A' \sqcup B'_1 \sqcup B'_2)|=|(P\cup B'_1 \cup B'_2)|=|P\cap (V(G) \setminus F')|+p+|B'_2|\geq \ell$ and for every $j$ we have $\min\{\thr(v_j), |N(v_j)\cap ((A'\setminus A) \sqcup B'_1 \sqcup B'_2 )|\}= \min\{\thr(v_j), |N(v_j)\cap B'_2|+d_j\} = dg(v_j)$. Note that the choice of $B'_2$ depends only on values $i,p,d_1,d_2, \dots, d_z$, but not on the value of $B'_1$ directly. 
In other words, $TS(i,p, d_1, d_2, \ldots, d_z)$ stores one of optimal ways of how the remaining $f'-i$ vertices in $F'$ should be chosen into $B'$ if the first $i$ vertices in $F'$ was chosen correspondingly to the values of $p$ and $d_j$.

Note that for some fields in the $TS$ table there may be no appropriate value of $B'_2$ (there is no appropriate solution). In such cases, we put the corresponding element to be equal to $V(G)$. It is  a legitimate operation since we are solving a minimization problem. Note that the desired value of $B'$  will be stored as 
\arxivmath TS(0,0,\min\{|N(v_1) \cap (A' \setminus A)|, \thr(v_1)\},\ldots, \min\{|N(v_z) \cap (A' \setminus A)|, \thr(v_z)\}). \arxivmath

We assign $TS(f', p, dg(v_1), dg(v_2), \dots dg(v_z))=\emptyset$ for every $p$ such that $p+|P\cap (V(G) \setminus F')| \ge \ell$. We do this since values $p, dg(v_1), dg(v_2), \dots dg(v_z)$ indicate that $A' \sqcup B'_1$ is already a solution. In all other fields of  type $TS(f',\cdot,\cdots, \cdot )$ we put the value of $V(G)$.
We now show how to evaluate values $TS(i, p, d_1, d_2,\ldots, d_z)$ for any $i\geq 0$ smaller than $f'$. We can evaluate any $TS(i,\cdot, \cdot,\dots, \cdot )$  in polynomial time if  we have all values $TS(i+1,\cdot, \cdot,\dots, \cdot )$ evaluated.  For each $j \in \{1,2,\ldots, z\}$, let $d^{i+1}_j=\min\left\{\thr(v_j), d_j+|N(v_j) \cap \{u_{i+1}\}|\right\}$. In order to compute $TS(i, p, d_1, d_2,\ldots, d_z)$,  we need to decide whether $u_{i+1}$ is in a target set or not. If $u_{i+1}$ is taken into $B'$ then $d_j$ becomes equal to $d^{i+1}_j$ for each $j$, if it is not, none of $d_j$ should change. Hence,
$TS(i, p, \langle d_j \rangle)=\min\left[TS(i+1, p+1,\langle d^{i+1}_j \rangle) \cup \{u_{i+1}\}, \right.
 \left.TS(i+1, p+|P \cap \{u_{i+1}\}|, \langle d_j \rangle)\right].$

Since $0 \le d_j \le dg(v_j)$ for any $j$, the $TS$ table has $\Ostar{(t+1)^{|Z|}}$ fields. Each field of the table is evaluated in polynomial time. So the desired $B'$ is found (hence, the solution is found) in $\Ostar{(t+1)^{|Z|}}$ time for any branch fixed in stage II.
Stages II and III together run in $2^{\omega_t |F|} \cdot (t+1)^{|Z|} \cdot 2^{(t-1)|Z|}\cdot (t+1)^{|Z|}\cdot n^{\O(1)}$
time for any fixed subbranch of stage I.

Actually, the $(t+1)^{2|Z|}$ multiplier in the upper bound can be improved. Recall that it corresponds to the number of possible variants of $dg(v_j)$ and the number of possible variants of $d_j$. However, note that $d_j \leq dg(v_j)$. So after each of $dg(v_j)$ is fixed in stage II, for $d_j$ there is only $dg(v_j)+1$ options in stage III. Hence, each of the pairs $(d_j, dg(v_j))$ can be presented only in $\binom{t+2}{2}$ variants. This gives an improvement of the $(t+1)^{2|Z|}$ multiplier to a $\binom{t+2}{2}^{|Z|}$ multiplier. So, the upper bound on the running time in stages II and III becomes $\Ostar{2^{\omega_t |F|}\cdot\binom{t+2}{2}^{|Z|}\cdot 2^{(t-1)|Z|}}$.

We rewrite this upper bound in terms of $n$ and $|F|$. Since $|Z| \le n-|F|$, the upper bound is
\arxivmath 2^{\omega_t |F|}\cdot\binom{t+2}{2}^{n-|F|} \cdot 2^{(t-1)(n-|F|)}\cdot n^{\O(1)}. \arxivmath

Now we are ready to choose $\gamma$. We set the value  of $\gamma$ so that computation in each branch created at the end of stage I takes at most $\Ostar{2^{\gamma n}}$ time. Note that the upper bound on the running time required for stages II and III increases while the value of $|F|$ decreases. So we can find $\gamma$ as the solution of equation $2^{\gamma n}=2^{\omega_t \gamma n}\cdot\binom{t+2}{2}^{(1-\gamma) n}\cdot 2^{(t-1)(1-\gamma)n}$.  
Hence, $\gamma=\frac{(t-1)+\log_2\binom{t+2}{2}}{(t-\omega_t)+\log_2\binom{t+2}{2}} < 1,\text{ as } \omega_t < 1$. So the overall running time is \arxivmath 2^{\gamma n} (2^{t+1}-t-1)^{ \frac{(1-\gamma)n}{t+1} } \cdot n^{\O(1)}, \arxivmath which is $\Ostar{(2-\epsilon_t)^n}$ for some $\epsilon_t>0$ since $\gamma<1$. 
\subsection{Two algorithms for constant thresholds in the perfect case}

Here, we present two algorithms for special cases of \textsc{Perfect Target Set Selection} with thresholds being at most two or three. These algorithms use the idea that cannot be used in the general case of \textsc{Target Set Selection}, so the running times of these algorithms are significantly faster than the running time of the algorithm from the previous subsection. 
\begin{theorem}
\textsc{Perfect Target Set Selection} with thresholds being at most two can be solved in $\Ostar{1.90345^n}$ time.
\end{theorem}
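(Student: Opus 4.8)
The plan is to keep the three–stage skeleton of the previous subsection (the partition $V(G)=A\sqcup Z\sqcup F$ into seeds, non‑seeds and undecided vertices, Reduction Rule~\ref{rr1}, and a Stage–I branching in the spirit of Branching Rule~\ref{br1} that drives the maximum degree of $G[F]$ below $t=2$), while exploiting the one feature available only in the perfect case: since $\ell=n$, \emph{every} vertex is eventually activated, so in every leaf of the search we know the final activated set exactly --- it is all of $V(G)$. This is precisely the information the general algorithm did not have, and it is what forced it to branch on the partial activation counts $dg(v)=\min\{|N(v)\cap B|,\thr(v)\}$ for each $v\in Z$, paying the expensive $(t+1)^{|Z|}\cdot 2^{(t-1)|Z|}$ factor. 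My first goal is therefore to reorganise the algorithm so that this factor disappears, which is where the improvement from $\approx 1.99^n$ down to $1.90345^n$ should come from.

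First I would install reduction rules tailored to thresholds at most two. A vertex with $\thr(v)=0$ is active from the outset and may be moved to $Z$; a vertex with $\deg(v)<\thr(v)$ --- in particular a threshold‑$2$ vertex of degree at most one --- can never be activated by the process and is forced into $A$; and Reduction Rule~\ref{rr1} keeps moving into $Z$ every vertex already in $\mathcal{S}(A)$. After exhaustively applying these together with a Stage–I branching on any $v\in F$ with $\deg_F(v)\ge\thr(v)$, I may assume $\Delta(G[F])<2$, so that $G[F]$ is a disjoint union of paths and cycles, while $|F|>\gamma n$ for a split constant $\gamma$ to be fixed at the end. These forced‑seed and threshold‑$0$ rules are what make the \emph{perfect} restriction bite, since in the general problem an unactivatable vertex need not be seeded at all.

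The combinatorial core is a clean reformulation of perfect activation at threshold two: $X$ is a perfect target set if and only if there is no nonempty $U$ disjoint from $X$ in which every vertex has at most one neighbour outside $U$ (the standard ``unactivatable set'' obstruction, specialised to $\thr\le 2$). Using this, in each Stage–I leaf I would enumerate the minimal partial vertex covers of the bounded‑degree graph $G[F]$ via Theorem~\ref{theorem:pvcf}, fixing which edges of $G[F]$ are covered by $B=X\cap F$; this costs only $2^{\omega_2|F|}$ with $\omega_2<1$, beating the trivial $2^{|F|}$. Combined with the knowledge that the whole graph activates, each such guess determines the activation order up to a polynomial‑time check, so the Stage–III dynamic program loses its per‑vertex activation‑count dimensions and reduces to optimising the number of seeds subject to the single monotone closure condition $\mathcal{S}(X)=V(G)$, checkable in polynomial time along a peeling order of the paths and cycles of $G[F]$. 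Because $G[F]$ is just paths and cycles, both the partial‑vertex‑cover enumeration and this check are sharply analysable.

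The main obstacle will be the quantitative bookkeeping needed to reach the specific constant $1.90345$. I would have to choose $\gamma$ so that the Stage–I branching cost, whose worst factor is $(2^{3}-3)^{1/3}=5^{1/3}$ per eliminated triple, balances against the per‑leaf cost $2^{\omega_2\gamma n}$ times the now cheap path/cycle verification, and then optimise over $\gamma$. I expect the hardest sub‑case to be threshold‑$2$ vertices of degree exactly two: such a vertex, if not seeded, activates strictly last among its closed neighbourhood and helps activate no one, so it is rigid and must be handled by a dedicated exchange/branching rule to keep the branching vector good enough. Getting that rule sharp is what ultimately pins the exponent below $\log_2 1.90345$.
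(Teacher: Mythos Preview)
Your plan diverges substantially from the paper's, and the central simplification you rely on is not justified.

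You propose to keep Stages II--III of the general algorithm and argue that, because in the perfect case $\mathcal{S}(X)=V(G)$ is known, the expensive $dg(v)$-branching and the $(t+1)^{|Z|}$-dimensional dynamic program collapse to a polynomial-time check. But knowing the \emph{final} activated set tells you nothing about the \emph{intermediate} sets $\mathcal{S}_i(X)$, and it is precisely those that the $dg(v)$ guesses were reconstructing. After fixing a minimal partial vertex cover $C\subseteq B$ of $G[F]$, you still need the minimum $B\supseteq C$ covering the same edges of $G[F]$ with $\mathcal{S}(A\sqcup B)=V(G)$; whether a given vertex of $Z$ ever activates depends on how many of its $F$-neighbours lie in $B$, and these constraints couple across $Z$. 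You give no argument that this is solvable in polynomial time, and ``a peeling order of the paths and cycles of $G[F]$'' does not help --- in fact $\Delta(G[F])<2$ means $G[F]$ is a \emph{matching}, not paths and cycles, so even the structural description is off. Without this step, your running-time balance cannot reach $1.90345^n$, or indeed any concrete bound.

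The paper's argument is both simpler and uses the perfect hypothesis in a completely different way. It first reduces (via an auxiliary lemma) to thresholds \emph{equal} to two, then runs two phases. Phase~1 brute-forces all subsets of size at most $(1-\gamma)n$; if none is a perfect target set, every perfect target set has size exceeding $(1-\gamma)n$. Phase~2 applies Reduction Rule~\ref{rr1}, a degree-$<2$ forcing rule, Branching Rule~\ref{br1}, and the additional rule you anticipated for an edge $uv\in G[F]$ with $\deg_G(u)=\deg_G(v)=2$ (three branches, factor $\sqrt{3}$). The crucial combinatorial observation is that once no rule applies, $A\sqcup Z$ is \emph{itself} a perfect target set: every remaining $F$-vertex has enough neighbours in $A\sqcup Z$ to activate within two rounds. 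Combined with Phase~1, this forces $|A\sqcup Z|>(1-\gamma)n$, hence $|F|\le\gamma n$, and a plain $2^{|F|}$ brute-force finishes the branch. Balancing $2^{H(1-\gamma)n}$ against $\sqrt{3}^{\,(1-\gamma)n}2^{\gamma n}$ at $\gamma\approx 0.656$ gives $1.90345^n$. There is no partial-vertex-cover enumeration and no Stage~III dynamic program at all; the perfect case is exploited not through knowledge of the final activated set but through the lower bound on the optimum obtained from Phase~1.
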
 \begin{proof}
	To make the proof simpler, we firstly prove the following useful lemma.
	
	\begin{lemma}\label{lemma:bounded_to_equal}
		There is a polynomial-time algorithm that, given an integer constant $t\ge 2$ and a graph with thresholds $(G, \thr)$, where $\thr(v) \le t$ for every $v \in V(G)$, outputs a graph with thresholds $(G', \thr')$, such that $\thr' \equiv t$ and $|V(G')|=|V(G)|+t\cdot(t+1)$.
		Moreover, $(G, \thr)$ has a perfect target set of size $k$ if and only if $(G', \thr')$ has a perfect target set of size $k+t^2$. 
	\end{lemma}
	\begin{proof}
		We prove the lemma by providing a construction of graph $G'$.
		In this construction, $G'$ is obtained by introducing several new vertices and edges to $G$.
		It is as follows.
		
		For each integer $i \in [t]$, introduce a vertex $s_i$ and $t$ vertices $l_{i,1}, l_{i,2}, \ldots, l_{i,t}$ to $G$, then for each $j \in [t]$ introduce an edge between $s_i$ and $l_{i, j}$.
		Suchwise for each $i$ a star graph with $t$ leaves is introduced, with $s_i$ being the center vertex of the star graph.
		Finally, for each vertex of the initial graph $v \in V(G)$ introduce an edge between $v$ and $s_i$ for each $i \in [t-\thr(v)]$, that is, connect $v$ with the centers of the first $t-\thr(v)$ introduced star gadgets.
		Recall that in $G'$ we consider thresholds all-equal to $t \ge 2$.
		
		Let now show that if $(G', \thr')$ has a perfect target set of size $k'$, then $(G, \thr)$ has a perfect target set of size at most $k'-t^2$.
		Observe that any $l_{i,j}$ should be presented in every perfect target set of $G'$, since it has only one (that is, less than $t$) neighbour vertex. 
		Thus, any perfect target set of $G'$ contains all $t^2$ leaves of the star gadgets.
		For any fixed $i$, the initial activation of all $l_{i,j}$ activates $s_i$ in the first round.
		Hence, each star gadget provides a vertex that is always activated.
		Since each vertex $v \in V(G)$ is connected to exactly $t-\thr(v)$ vertices $s_i$ in $G'$, after activation of all $s_i$ $v$ requires $\thr(v)$ more vertices to become activated, that is the same as it does in $(G, \thr)$.
		Thus, if we remove the newly-introduced vertices from a perfect target set of $(G', \thr')$, we obtain a perfect target set of $(G, \thr)$.
		So if the perfect target set of $(G', \thr')$ has size $k'$, we obtain a perfect target set of $(G, \thr)$ of size at most $k'-t^2$.
		
		In the other direction, if $(G, \thr)$ has a perfect target set of size $k$, then $(G', \thr')$ has a perfect target set of size $k+t^2$. One can obtain a perfect target set of $(G', \thr')$ from a perfect target set of $(G, \thr)$ by adding all vertices $l_{i,j}$ to it.
	\end{proof}

	Lemma \ref{lemma:bounded_to_equal} allows us to reduce the case when all thresholds are bounded by two to the case when they are equal to two, introducing only a constant number of vertices.

	Let $(G, \thr)$ be a graph with thresholds, where all thresholds equal two. For this case, we present an algorithm with $\Ostar{1.90345^n}$ running time that finds a perfect target set of $(G, \thr)$ of minimum possible size.
	
	We set $\gamma=0.655984$. The algorithm consists of two parts. In the first part, the algorithm applies brute-force on all possible subsets $X \subseteq V(G)$ of size at most $(1-\gamma)n$, in ascending order of their size. If the algorithm finds $X$ that is a perfect target set, i.e.\ $\mathcal{S}(X)=V(G)$, then it outputs the set and stops. Otherwise, the algorithm runs its second part.

	The second part of the algorithm is a recursive branching algorithm that maintains sets $A,Z,F$ similarly to the algorithm in subsection \ref{sec:algo}. The branching algorithm consists of two reduction and two branching rules. Here, we reuse reduction rule \ref{rr1} and branching rule \ref{br1} from the previous subsection. Additionally, we introduce the following rules.
	
	\begin{rrule}\label{rule:small_deg_2}
		If there is a vertex $v \in F$ with $\deg_G(v) < 2$, assign $v$ to $A$.
	\end{rrule}
	
	Reduction rule \ref{rule:small_deg_2} is correct since such vertex cannot be activated other than being put in a target set. 
	
	\begin{brule}\label{rule:edge_two}
		If there are two vertices $u, v \in F$ with $uv \in E(G)$ and $\deg_G(u)=\deg_G(v)=2$, then consider three branches:
		\begin{itemize}
			\item $u \in Z$, $v \in A$;
			\item $u \in A$, $v \in Z$;
			\item $u, v \in A$.
		\end{itemize}
	\end{brule}
	
	Branching rule \ref{rule:edge_two} is correct since if none of $u,v$ is in a target set, none of them will eventually have two activated neighbours and thus the set cannot be completed to a perfect target set.
	
	If none of the rules can be applied, the algorithm applies brute-force on all $2^{|F|}$ possibilities of how vertices in $F$ should be assigned to $A$ and $Z$. This finishes the description of the second part and the whole algorithm. We now give a bound on its running time.
	
	By Lemma \ref{lemma:subsets}, the first part of the algorithm runs in $\Ostar{2^{H(1-\gamma)n}}=\Ostar{1.90345^n}$ time. If the algorithm does not stop in this part, then any perfect target set of $G$ consists of at least $(1-\gamma)n$ vertices and the second part is performed.
	
	Branching rules \ref{br1} and \ref{rule:edge_two} give branching vectors $(3,3,3,3,3)$ (five variants are considered for three vertices) and $(2,2,2)$ (three variants are considered for two vertices) respectively, and the second vector gives bigger branching factor equal to $\sqrt{3}$.
	
	Observe that if branching rules \ref{br1}, \ref{rule:edge_two} and reduction rules \ref{rr1}, \ref{rule:small_deg_2} cannot be applied, then $A \sqcup Z$ is in fact a perfect target set of $G$. Indeed, in that case $G[F]$ consists only of isolated vertices and isolated edges, as if there was a vertex $v\in F$ with $\deg_F(v) \ge 2$, then branching rule \ref{br1} would be applied. Note that if some vertex $v \in F$ is isolated in $G[F]$, then it has at least $\deg(v) \ge \thr(v)=2$ neighbours in $A \sqcup Z$, hence it becomes activated in the first round. Consider an isolated edge $uv \in G[F]$. Note that $u$ and $v$ cannot simultaneously have degree two in $G$, since branching rule \ref{rule:edge_two} excludes this case. It means that either $u$ or $v$ has degree at least three and thus has at least two neighbours in $A \sqcup Z$. Hence, it becomes activated in the first round. Since the other vertex has at least one neighbour in $A \sqcup Z$, at the end of the first round it will have at least two activated neighbours. Thus, it becomes activated no later than the second round. 
	
	We conclude that if we need to apply brute-force on $2^{|F|}$ variants, then $A \sqcup Z$ is a perfect target set of $G$. Hence, $|A\sqcup Z| \ge (1-\gamma)n$ and $|F| \le \gamma n$. It follows that the second part running time is at most $\sqrt{3}^{(1-\gamma)n} 2^{\gamma n}\cdot n^{\O(1)}=\Ostar{1.90345^n}$. So, the running time of the whole algorithm is $\max\{2^{H(1-\gamma)n}\cdot n^{\O(1)},  \sqrt{3}^{(1-\gamma)n} 2^{\gamma n}\cdot n^{\O(1)}\} = \Ostar{1.90345^n}$.
\end{proof}
 
\begin{theorem}
 \textsc{Perfect Target Set Selection} with thresholds being at most three can be solved in $\Ostar{1.98577^n}$ time.
\end{theorem}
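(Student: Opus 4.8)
The plan is to mirror the structure of the $\thr \le 2$ proof. First I would invoke Lemma~\ref{lemma:bounded_to_equal} with $t=3$ to reduce, at the cost of a constant number of extra vertices, to the case where $\thr \equiv 3$. I then set up the same two-part scheme: for a constant $\gamma$ to be fixed at the end, the first part brute-forces over all vertex subsets $X$ with $|X| \le (1-\gamma)n$ in ascending order of size, testing whether $\mathcal{S}(X)=V(G)$; by Lemma~\ref{lemma:subsets} this runs in $\Ostar{2^{H(1-\gamma)n}}$ time, and if it fails then every perfect target set has more than $(1-\gamma)n$ vertices. The second part is a branching algorithm maintaining the partition $V(G)=A \sqcup Z \sqcup F$, reusing reduction rule~\ref{rr1} and branching rule~\ref{br1}.

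After exhaustive application of branching rule~\ref{br1} we have $\Delta(G[F]) \le 2$, so $G[F]$ is a disjoint union of paths and cycles (and isolated vertices). Next I would add a reduction rule analogous to reduction rule~\ref{rule:small_deg_2}: any $v \in F$ with $\deg_G(v) < 3$ cannot reach its threshold through neighbours and must be placed into $A$. Once this is exhausted, every $v \in F$ satisfies $\deg_G(v) \ge 3$. At this point a vertex with $\deg_F(v)=0$, or with $\deg_F(v) \le 1$ and $\deg_G(v) \ge 4$, already has at least three neighbours in $A \sqcup Z$ and activates in the first round; the obstructions to $A \sqcup Z$ being a perfect target set come from the ``tight'' vertices of small $\deg_G(v)$ lying on the paths and cycles of $G[F]$, in particular the degree-$3$ vertices with $\deg_F(v)=2$ that rely on both of their $F$-neighbours becoming activated. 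The core of the proof is to introduce branching rules that eliminate exactly these configurations: for each such tight local pattern along a path or cycle I would branch on the few admissible ways its vertices can be split between $A$ and $Z$, discarding the splits that leave a vertex permanently unable to reach threshold $3$, just as branching rule~\ref{rule:edge_two} discards the ``both in $Z$'' option for a degree-two edge.

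The correctness argument then parallels the $\thr \le 2$ case: I would show that once none of the reduction or branching rules applies, the surviving structure of $G[F]$ forces every $F$-vertex to accumulate three activated neighbours within a bounded number of rounds, so that $A \sqcup Z$ is itself a perfect target set. Consequently the brute-force over $2^{|F|}$ assignments is invoked only when $A \sqcup Z$ is a perfect target set, whence $|A \sqcup Z| \ge (1-\gamma)n$ and $|F| \le \gamma n$. Writing $c$ for the worst branching factor over all branching rules used, the second part then runs in $\Ostar{c^{(1-\gamma)n} 2^{\gamma n}}$ time, and I would finally fix $\gamma$ to balance this against the first-part bound $2^{H(1-\gamma)n}$, obtaining $\Ostar{1.98577^n}$.

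The main obstacle is the branching-rule design. Threshold three admits far more tight local patterns than threshold two: besides degree-two edges one must cope with degree-$3$ and degree-$4$ vertices of $F$ having $\deg_F(v)=2$, degree-$3$ path endpoints, and whole paths or cycles consisting of degree-$3$ vertices, where $A \sqcup Z$ alone cannot trigger any activation and some vertices must be forced into $A$. Each such obstruction needs a dedicated branching rule, and the branching factors must all be kept below the value that, after balancing with Lemma~\ref{lemma:subsets}, produces $1.98577$. Showing that the resulting catalogue of rules is simultaneously exhaustive---no activation-blocking pattern survives---and correct---no admissible solution is discarded---is the delicate part of the argument.
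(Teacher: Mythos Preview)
Your plan follows the right two--phase template, and reusing Lemma~\ref{lemma:bounded_to_equal}, reduction rule~\ref{rr1}, and branching rule~\ref{br1} is exactly what the paper does. But there is a genuine gap at the core of the argument, precisely where you flag the ``main obstacle''.

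You assert that, after a suitable catalogue of local branching rules, $A\sqcup Z$ \emph{itself} will be a perfect target set, and you then bound $|F|\le\gamma n$ from $|A\sqcup Z|\ge(1-\gamma)n$. The paper does \emph{not} do this, and with good reason: with threshold~$3$ no finite list of local rules with branching factor bounded away from~$2$ is shown to eliminate all activation-blocking patterns. For a concrete obstruction, take a path $v_1v_2\cdots v_k$ in $G[F]$ whose endpoints have $\deg_G=3$ and whose interior vertices all have $\deg_G=4$. Every vertex has exactly two neighbours in $A\sqcup Z$, so none activates; no two degree-$3$ vertices are adjacent, and no degree-$4$ vertex has two degree-$3$ neighbours in $F$, so neither the threshold-$3$ analogue of rule~\ref{rule:edge_two} nor any rule of the type you describe fires. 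You give no rule that handles such paths (or the analogous cycles) with a usable branching factor.

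What the paper does instead is decouple the size bound on $F$ from the claim that $A\sqcup Z$ is perfect. It uses only three branching rules (rule~\ref{br1}; the threshold-$3$ version of rule~\ref{rule:edge_two}; and one new rule~\ref{rule:four_adj_two_three} branching on a degree-$4$ vertex of $F$ with two degree-$3$ $F$-neighbours, worst factor $7^{1/3}$). After these are exhausted, $A\sqcup Z$ is \emph{not} necessarily perfect, but the rules guarantee that in every path or cycle of $G[F']$ (where $F'=F\setminus\mathcal{S}(A\sqcup Z)$) any two ``bad'' interior degree-$3$ vertices are separated by at least two degree-$4$ vertices. Hence the bad vertices make up at most one third of each component, and adding them (or one arbitrary vertex per bad-free component) to $A\sqcup Z$ yields a perfect target set of size at most $n-\tfrac{2}{3}|F|$. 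Combined with the phase-one failure, this gives $|F|\le\gamma n$.

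This accounting is the missing idea, and it forces a change you did not anticipate: the phase-one brute force must range over subsets of size at most $(1-\tfrac{2}{3}\gamma)n$, not $(1-\gamma)n$, and $\gamma$ is then chosen to balance $2^{H(1-\frac{2}{3}\gamma)n}$ against $7^{(1-\gamma)n/3}\cdot 2^{\gamma n}$, giving $\gamma\approx 0.8395$ and the stated $\Ostar{1.98577^n}$.
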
 \begin{proof}

Here, we adapt the algorithm working for thresholds equal to two to the case when all thresholds equal three.
Again, we then use Lemma \ref{lemma:bounded_to_equal} to complete the proof.

Let $\gamma=0.839533$. 
At first, algorithm applies brute-force over all subsets of size at most $(1-\frac{2}{3}\gamma)n$ and stops if it finds a perfect target set among them. If the algorithm has not found a perfect target set on this step then we run a special branching algorithm.

As with thresholds equal to two we use branching rules \ref{br1}, \ref{rule:edge_two} and reduction rules \ref{rr1}, \ref{rule:small_deg_2}. The only difference is that now in reduction rule \ref{rule:small_deg_2}  and in branching rule \ref{rule:edge_two} we use constant $3$ instead of $2$. We also introduce a new branching rule for this algorithm.

\begin{brule}\label{rule:four_adj_two_three}
Let $v\in F$, $u, w \in N(v) \cap F$ and $\deg_G(v)=4, \deg_G(u)=\deg_G(w)=3$. Consider all branches that split $u,v,w$ between $A$ and $Z$ and assign at least one vertex to $A$.
\end{brule}
 The rule is correct as we omit only one branch that put all three vertices $u,v,w$ into $Z$. Note that if none of the vertices $u,v,w$ is activated initially then none of them will become activated. Hence, this branch cannot generate any perfect target set.

We apply the above-stated rules exhaustively. When none of the rules can be applied we simply apply brute-force on all possible subsets of $F$. That is the whole algorithm. Now, it is left to bound the running time of the algorithm.

The first part runs in $\Ostar{2^{H(1-\frac{2}{3}\gamma)n}}=\Ostar{1.98577^n}$ time. If the algorithm does not stop after the first part then any perfect target set of $G$ contains at least $(1-\frac{2}{3}\gamma)n$ vertices. Branching rules \ref{br1}, \ref{rule:edge_two}, \ref{rule:four_adj_two_three} give the following branching factors respectively: $12^{\frac{1}{4}}$ (since $12$ options are considered for $4$ vertices), $3^{\frac{1}{2}}$ ($3$ options for $2$ vertices) and $7^\frac{1}{3}$ ($7$ options for $3$ vertices). The biggest branching factor among them is $7^\frac{1}{3}$. 

Now, we bound the size of $F$ after exhaustive application of all rules.

\begin{lemma}
After exhaustive application of all rules $F$ consists of at most $\gamma n$ vertices.
\end{lemma}

\begin{proof}
Consider values of $A$, $Z$, $F$ when none of the rules can be applied. In this case we have that $\Delta(G[F]) < 3$. 

Note that our graph does not contain perfect target sets of size at most  $(1-\frac{2}{3}\gamma)n$. Otherwise algorithm would have finished working on the first step when it was brute-forcing over all subsets of size at most $(1-\frac{2}{3}\gamma)n$. Now, we start constructing a new perfect target set $P$ based on the structure of $A,F,Z$. Then, from the fact that $|P| > (1-\frac{2}{3}\gamma)n$, we obtain that $|F| \leq \gamma n$.

First of all, put $A \sqcup Z$ into a new perfect target set $P$. Let us show that degrees of vertices in the set $F'=F \setminus \mathcal{S}(A \sqcup Z)$ can only be three or four.  If $v \in F$ and $\deg_G(v)\geq 5$, then $v$ has at most two neighbours in $F$. Hence, it has at least three neighbours in $A \sqcup Z$ and so $v$ is in $\mathcal{S}(A\sqcup Z)$. 

Since $\Delta(G[F]) < 3$, it follows that $\Delta(G[F']) < 3$ also. Hence, any vertex $v \in F'$ with $\deg_G(v)=4$ requires one more activated neighbour to become activated. Also, $G[F']$ consists only of isolated paths and cycles. Consider any isolated path in $G[F']$. Observe
that any of its endpoints cannot have degree four in $G$, since otherwise it would have at least three neighbours in $\mathcal{S}(A \sqcup Z)$ and would be activated. Hence, all endpoints of all isolated paths are vertices
of degree three. Note that any endpoint has exactly two neighbours in $\mathcal{S}(A\sqcup Z)$. Since branching rule \ref{rule:small_deg_2} cannot be applied, any two endpoints
cannot be adjacent. Thus any isolated path in $G[F']$ consists of at least three vertices.

It means that the vertices that require two more activated neighbours to become activated are vertices of degree three that are not endpoints in any isolated path in $G[F']$. Note that if $u,v \in F'$ with $\deg_G(u)=\deg_G(v)=3$, and $u,v$ lie in the same isolated path or cycle $Q$ in $G[F']$, then there is at least two vertices of degree four in $Q$ between $u$ and $v$, since otherwise one of branching rules
\ref{rule:edge_two}   or  \ref{rule:four_adj_two_three} can be applied. Thus in any isolated path or cycle $Q$ in $G[F']$ the number of vertices that require at least two activated neighbours to become activated constitute at most one-third of the length of $Q$. We put all such vertices in the set $P$. There may be isolated paths or cycles left in $G[F']$ from which we have not put any vertex into $P$. For each such path or cycle we  choose an arbitrary vertex from it and put it into $P$. Note that from each isolated path or cycle in $G[F']$ we put no more than one-third of its vertices into $P$. Construction of $P$ is finished. 

From each isolated cycle or path we picked at least one vertex into $P$. The vertices that left require only one additional activated neighbour to become activated, in case of initially activated set $A\sqcup Z$. Hence, $P$ activates the whole graph. The size of $P$ is at most
$|A\sqcup Z|+\frac{1}{3}|F'|\le n-|F|+\frac{1}{3}|F| = n-\frac{2}{3}|F|$. It means that $n-\frac{2}{3}|F|\ge (1-\frac{2}{3}\gamma)n$. Hence, we proved $|F|\le \gamma n$.
\end{proof}

Using this lemma, we can bound the running time of the second part. The largest branching factor in the rules is $7^{\frac{1}{3}}$. Hence, the running time is at most   $7^{\frac{1}{3}(1-\gamma)n}2^{\gamma n}\cdot n^{\O(1)}=\Ostar{1.98577^n}$. Combining it with the running time of the first part we get that the overall running time is $\Ostar{1.98577^n}$.
\end{proof}

\subsection{Algorithm for thresholds bounded by one-third of degrees}

Here, we prove the following.

\begin{theorem} \label{thm:one_third}
	Let $G$ be a connected graph with at least three vertices. Assume that $\thr(v) \le \lceil \frac{\deg(v)}{3} \rceil$ for every $v \in V(G)$. Then there is a perfect target set of $(G, \thr)$ of size at most $0.45|V(G)|$.
\end{theorem}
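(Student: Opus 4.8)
The plan is to reformulate the existence of a small perfect target set as the existence of a linear ordering of $V(G)$ with few ``seeds''. Given any ordering $\sigma$ of the vertices, call $v$ a \emph{seed} if fewer than $\thr(v)$ of its neighbours precede it in $\sigma$. The first step is to observe that the set $S_\sigma$ of seeds is always a perfect target set: processing the vertices in the order $\sigma$, every non-seed $v$ has at least $\thr(v)$ earlier neighbours, each of which is either a seed (activated initially) or an earlier non-seed (activated by induction along $\sigma$), so $v$ becomes activated. Hence it suffices to exhibit a single ordering in which the number of seeds is at most $0.45\,|V(G)|$, and I would do this by the probabilistic method, sampling $\sigma$ from a biased distribution and bounding the expected number of seeds.

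The key quantitative observation is that, under a \emph{uniformly} random ordering, a vertex $v$ of degree $d$ is a seed with probability exactly $\thr(v)/(d+1)\le \lceil d/3\rceil/(d+1)$, since the number of its neighbours appearing before it is uniform on $\{0,1,\dots,d\}$. This ratio is at most $2/5$ for every $d\ge 2$ (attained at $d=4$), but equals $1/2$ when $d=1$; thus degree-one vertices are the \emph{only} obstruction to a uniform order working directly. Here connectivity and $|V(G)|\ge 3$ enter: every leaf has a unique neighbour, which is necessarily a non-leaf. So I would sample $\sigma$ as a uniformly random permutation of the non-leaf vertices, followed by all leaves placed last (in arbitrary order). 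Every leaf then has its neighbour before it and, as $\thr\le\lceil 1/3\rceil=1$ for leaves, is never a seed.

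It then remains to bound the expected number of seeds among non-leaves. For a non-leaf $v$ write $\lambda_v$ for the number of its leaf-neighbours and $\mu_v=\deg(v)-\lambda_v$ for the number of its non-leaf neighbours; since leaves are deferred, only non-leaf neighbours can precede $v$, so its number of earlier neighbours is uniform on $\{0,\dots,\mu_v\}$ and $\Pr[v\text{ is a seed}]\le \min\{\lceil \deg(v)/3\rceil,\ \mu_v+1\}/(\mu_v+1)$. The plan is a charging argument: give each non-leaf $v$ the budget $1+\lambda_v$, accounting for $v$ together with its private leaves, so the total budget is exactly $|V(G)|$. One verifies the per-vertex inequality $\Pr[v\text{ is a seed}]\le 0.45\,(1+\lambda_v)$ by splitting on $\lambda_v\in\{0\}$, $\{1\}$, $\{\ge 2\}$: for $\lambda_v=0$ the left side is $\lceil\deg(v)/3\rceil/(\deg(v)+1)\le 2/5$; for $\lambda_v\ge 2$ the right side already exceeds $1$; and for $\lambda_v=1$ the left side is $\lceil\deg(v)/3\rceil/\deg(v)\le 1/2\le 0.9$. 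Summing over non-leaves gives $\mathbb{E}[\,|S_\sigma|\,]\le 0.45\,|V(G)|$, so some ordering achieves at most $0.45\,|V(G)|$ seeds, and its seed set is the desired perfect target set.

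The main difficulty I anticipate is purely the bookkeeping that forces the constant below $0.45$: one must control the uniform-order seed probabilities simultaneously with the degradation caused by removing leaf-neighbours from the pool of possible predecessors, and the charging must absorb exactly the excess $1/2>0.45$ occurring near leaves. The degree-$4$ case (threshold $2$, probability $2/5$) is what pins the constant, so the inequality $\Pr[v\text{ seed}]\le 0.45(1+\lambda_v)$ holds with room to spare; the delicate points to double-check are the small and irregular degrees and the vertices all of whose neighbours are leaves (where $\mu_v=0$). Connectivity together with $|V(G)|\ge 3$ is precisely what makes the whole scheme well defined, since it guarantees that each leaf can be placed behind a genuine non-leaf neighbour.
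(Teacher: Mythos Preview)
Your proof is correct and proceeds by a genuinely different route from the paper's. Both arguments rest on the Ackerman--Ben-Zwi--Wolfovitz observation that, for a uniformly random ordering $\sigma$, a vertex $v$ is a seed with probability $\thr(v)/(\deg(v)+1)$, and both identify degree-$1$ vertices as the sole obstacle to the direct bound $2/5$. The paper handles leaves by an induction: if leaves outnumber non-leaves it finds a vertex with two pendant neighbours, deletes it, and recurses on the components (using connectivity and $|V|\ge 3$ to show the small components contribute nothing); otherwise $n_1\le n/2$ and the uniform permutation already gives $\tfrac12 n_1+\tfrac25 n_{\ge 2}\le 0.45n$. You instead keep a single ordering but bias it, pushing all leaves to the end so they are never seeds, and then pay for the lost predecessors via the charging $\sum_{v\text{ non-leaf}}(1+\lambda_v)=|V(G)|$.

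Your approach buys a cleaner, induction-free argument: one sample space, one per-vertex inequality, done. The only places to be careful are exactly the ones you flag---leaves having a non-leaf neighbour (guaranteed by connectivity and $|V|\ge3$), and the $\mu_v=0$ case, which forces $\lambda_v\ge2$ and is absorbed by the trivial bound. The paper's inductive route, on the other hand, makes more visible \emph{why} $0.45$ is the right constant: it separates the ``many leaves'' regime (where the combinatorics of pendant pairs does the work) from the ``few leaves'' regime (where the clean probabilistic bound $\tfrac12 n_1+\tfrac25 n_{\ge2}$ applies), and the two meet exactly at $9/20$.
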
 \begin{proof}
We prove this fact by induction on the number of vertices $n$ in $G$. 

If $G$ is connected and  $|V(G)|=3$ then any single vertex in $G$ forms a perfect target set. This is true since $\Delta(G) \le 2$ and thus the threshold value of any vertex of $G$ does not exceed $1$.

From now on $G$ is a connected graph on $n$ vertices with $n>3$. Let $n_{1}$ be the number of vertices in $G$ of degree one and $n_{\ge 2}$ be the number of vertices in $G$ of degree at least two, $n_1+n_{\ge 2}=n$. 

If $n_1 > n_{\ge 2}$, then there exist vertices $v,u_1,u_2 \in V(G)$ such that $vu_1, vu_2 \in E(G)$, $\deg(u_1)=\deg(u_2)=1$. Let $\rho(G, \thr)$ be the size of minimum perfect target set of $(G, \thr)$. Then $\rho(G, \thr) \le 1+\rho(G', \thr')$, where $G'=G \setminus v$ and $\thr'(u)=\thr(u)-|N(u) \cap \{v\}|$ for every $u \in V(G')$. Note that $\thr'(u) \le \lceil\frac{\deg_{G'}(u)}{3}\rceil$.

Let $G'$ consist of $k$ connected components  $C_1, C_2, \ldots, C_k$, where $k\ge 3$, since $C_1=\{u_1\}$, $C_2=\{u_2\}$. We assume that $|C_i| \le |C_{i+1}|$ for every $i \in \{1, 2, \ldots, k-1\}$.  We have that $\rho(G', \thr')=\sum\limits_{i=1}^k \rho(G'[C_i], \thr')$. Observe that if $|C_i| \le 2$, then $\rho(G'[C_i], \thr')=0$. Indeed, if $C_i=\{u\}$, then $\thr'(u)\le \deg_{G'}(u)=0$, and $u$ becomes activated in the first round. If $C_i=\{u, w\}$, then either $uv \in E(G)$ or $vw \in E(G)$, without loss of generality, say that $uv \in E(G)$. Also, $\deg_G(u), \deg_G(w) \leq 2$, thus $\thr(u)$ and $\thr(w)$ are not greater than one. Since $uv \in E(G)$, we have that $\thr'(u)=\thr(u)-1\le 0$. Thus $u$ becomes activated in the first round and as  $\thr'(w)\le \thr(w)\le 1$, then $w$ becomes activated no later than the second round. If $|C_i|\ge 3$, then, by induction, $\rho(G'[C_i], \thr')\le 0.45|C_i|$.

Hence, $\rho(G', \thr')\le \sum\limits_{i=m+1}^{k} \rho(G'[C_i], \thr')\le 0.45\sum\limits_{i=m+1}^{k} |C_i|$, where $m$ is such that $|C_m|\le 2$ and $|C_{m+1}|\geq 3$. Since $m\ge 2$, we have  $\rho(G', \thr')\le 0.45(|V(G')|-2)$. This implies that $\rho(G, \thr)\le 1+0.45(|V(G')|-2)=1+0.45(|V(G)|-1-2)<0.45|V(G)|$.

To handle the case $n_1 \le n_{\ge 2}$ (equivalent to $2n_1 \le n$) we use a combinatorial model proposed by Ackerman et al.\ in \cite{ackerman2010combinatorial}.
For each permutation $\sigma$ of vertices $V(G)$ we construct a perfect target set in the following way. We put vertex $v$ into the perfect target set if the number of neighbours to the left of $v$ in the permutation $\sigma$ is less than $\thr(v)$. It is easy to see that after such construction we get a perfect target set $P_{\sigma}$, as vertices will become activated from the left to the right. If we take a random permutation $\sigma$ among all permutations then the probability that a particular vertex $v$ ends up in $P_{\sigma}$ equals  $\frac{\thr(v)}{\deg(v)+1}$. Since $\thr(v)\le \lceil\frac{\deg(v)}{3}\rceil$, for a vertex of degree one the probability is bounded by $\frac{1}{2}$, for a vertex of degree two --- by $\frac{1}{3}$, for a vertex of degree three --- by $\frac{1}{4}$, for a vertex of degree four --- by $\frac{2}{5}$, etc. Observe that the highest probability bounds are for vertices of degree one and four, thus the expected value of the perfect target set size of $(G', \thr')$ is bounded by
$$\frac{1}{2}n_1+\frac{2}{5}n_{\ge 2} = \frac{1}{2}n_1 + \frac{2}{5}(n-n_1)= \frac{2}{5}n+\frac{1}{10}n_1 \le \frac{2}{5}n+\frac{1}{10}\cdot\frac{1}{2}n=\frac{9}{20}n.$$
Hence, there is at least one perfect target set of $(G, \thr)$ of size at most $0.45 n$.
\end{proof} 
\begin{corollary}
\textsc{Target Set Selection} with thresholds bounded by one-third of degree rounded up can be solved in $\Ostar{1.99001^n}$ time.
\end{corollary}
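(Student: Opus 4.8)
The plan is to reduce the general \textsc{Target Set Selection} instance to a collection of independent per-component enumeration problems that are tied together by a knapsack-style dynamic program, using Theorem~\ref{thm:one_third} to keep each enumeration small.

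First I would decompose $G$ into its connected components $C_1, \ldots, C_m$. Since each $C_i$ is a connected component, $\deg_G(v) = \deg_{G[C_i]}(v)$ for every $v \in C_i$, so the hypothesis $\thr(v) \le \lceil \deg(v)/3 \rceil$ is inherited by each $G[C_i]$, and the activation process inside a component never depends on vertices outside it. For each component I would compute the function $f_i(b) = \max\{\, |\mathcal{S}(X)| : X \subseteq C_i,\ |X| \le b \,\}$, the largest number of vertices of $C_i$ activatable with a budget of $b$ initial vertices. Having all $f_i$, the instance is a YES-instance if and only if the budget $k$ can be split as $b_1 + \cdots + b_m \le k$ with $\sum_i f_i(b_i) \ge \ell$; this bounded-knapsack condition is decided exactly by a polynomial-time dynamic program over the components.

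The heart of the argument is computing each $f_i$ cheaply. For components with $|C_i| \le 2$ this is done directly in constant time. For $|C_i| \ge 3$, Theorem~\ref{thm:one_third} guarantees a perfect target set of $G[C_i]$ of size at most $0.45|C_i|$; hence $f_i(b) = |C_i|$ for every $b \ge \lfloor 0.45|C_i| \rfloor$, and only the values $f_i(b)$ for $b \le \lfloor 0.45|C_i| \rfloor$ remain to be found. I would obtain these by brute force: enumerate every $X \subseteq C_i$ with $|X| \le \lfloor 0.45|C_i| \rfloor$, compute $\mathcal{S}(X)$ in polynomial time, and update $f_i(|X|)$ accordingly. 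By Lemma~\ref{lemma:subsets} (applicable since $0.45 \le \tfrac{1}{2}$) the number of such subsets is at most $2^{H(0.45)|C_i|}$, so this component costs $\Ostar{2^{H(0.45)|C_i|}}$ time. Summing over the at most $n$ components and using $|C_i| \le n$ bounds the total enumeration by $\Ostar{2^{H(0.45)n}}$, and since $2^{H(0.45)} < 1.99001$ the whole algorithm runs in $\Ostar{1.99001^n}$ time.

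The point that needs the most care --- and the reason a naive global enumeration of all target sets of size at most $0.45n$ does \emph{not} work --- is that Theorem~\ref{thm:one_third} is a statement about \emph{connected} graphs only. A disconnected graph built from many two-vertex components, each with both thresholds equal to one, has minimum perfect target set of size $n/2 > 0.45n$, so restricting the global search to subsets of size at most $0.45n$ would miss solutions. The component decomposition followed by the knapsack combination is precisely what repairs this: the $0.45$-bound is invoked only where it is valid (components of size at least three), while the small components, where it fails, are handled explicitly and contribute only polynomial overhead.
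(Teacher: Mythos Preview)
Your proposal is correct and follows the same strategy as the paper: isolate the connected components of size at least three (where Theorem~\ref{thm:one_third} caps the optimum at $0.45$ of the component size, so only subsets up to that size need be enumerated), handle the size-$\le 2$ components separately, and combine. The paper differs only cosmetically---it enumerates globally over the union $G'$ of the large components (a perfect target set of $G'$ of size at most $0.45|V(G')|$ exists by summing the per-component bounds) and then fills the remaining budget greedily from the small components, rather than building your per-component tables $f_i$ and combining by a knapsack DP---but the running-time analysis via Lemma~\ref{lemma:subsets} and the crucial observation that the connectedness hypothesis of Theorem~\ref{thm:one_third} forces the small components to be treated separately are identical.
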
 
\begin{proof}
Let $(G, \thr)$ and $k, \ell$ be an instance of \textsc{Target Set Selection} with $|V(G)|=n$ and $\thr(v)\le \lceil \frac{\deg(v)}{3}\rceil$ for every $v \in V(G)$. We are looking for $X \subseteq V(G)$ with $|X| \le k$ and $|\mathcal{S}(X)| \ge \ell$.

Consider subgraph $G'$ of $G$ consisting of all connected components of $G$ of size at least three. By Theorem \ref{thm:one_third}, $(G', \thr)$ has a perfect target set of size at most $0.45|V(G')| \le 0.45n$, hence it is enough to consider such $X$ that $|X \cap V(G')| \le 0.45n$. We apply brute-force on all such variants of $|X\cap V(G')|$. By Lemma \ref{lemma:subsets}, it takes $\Ostar{2^{H(0.45)n}}=\Ostar{1.99001^n}$ time.

When $|X\cap V(G')|$ is fixed, it is left to consider connected components of $G$ of size less than three. Note that if we already have $|X \cap V(G')| \le k$ and $|\mathcal{S}(X\cap V(G'))| \ge \ell$, we may set $X=X \cap V(G')$ and stop. Otherwise, we should consider adding vertices from connected components of size one or two to $X$. Adding a vertex from a connected component of size one, i.e.\ isolated vertex, increases the number of activated vertices by one, and adding a vertex from a component of size two increases this number by two. Thus we greedily assign a single vertex from each component of size two to $X$, but no more than $k-|X\cap V(G')|$ in total. If after that the size of $X$ is still less than $k$, we assign as many isolated vertices of $G$ to $X$ as we can. Then we finally check whether $|\mathcal{S}(X)| \ge \ell$.

The greedy part of the algorithm runs in polynomial time for each variant of $|X\cap V(G')|$. Hence, the whole algorithm runs in $\Ostar{1.99001^n}$ time.
\end{proof}

\section{Algorithm for bounded dual thresholds}

Let $(G, \thr)$ be a graph with thresholds. By \textit{dual threshold} of vertex $v \in V(G)$ we understand the value $\overline{\thr}(v)=\deg(v)-\thr(v)$. In terms of dual thresholds, $v$ becomes activated if it has at most $\overline{\thr}(v)$ not activated neighbours.
For bounded dual thresholds we prove the following theorem.

 \begin{theorem}
 For any non-negative integer $d$, \textsc{Perfect Target Set Selection} with dual thresholds bounded by $d$ can be solved in $\Ostar{(2-\epsilon_d)^n}$ randomized time for some $\epsilon_d > 0$.
 \end{theorem}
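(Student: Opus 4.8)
The plan is to recast the problem as a vertex-deletion problem with an upward-closed solution family, and then turn a single-exponential parameterized algorithm into a subexponentially-better-than-$2^n$ one via monotone local search.

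First I would record a structural reformulation. A set $X$ is a perfect target set iff $\mathcal{S}(X)=V(G)$, and since the activation process is monotone (if $X \subseteq X'$ then $\mathcal{S}(X)\subseteq \mathcal{S}(X')$), the family of perfect target sets is closed under taking supersets. Next, call a nonempty $U\subseteq V(G)$ \emph{resilient} if $\deg_U(v)\ge \overline{\thr}(v)+1$ for every $v\in U$; in terms of dual thresholds this means no vertex of $U$ can ever be activated, even when all of $V(G)\setminus U$ is active. A short induction shows that $V(G)\setminus \mathcal{S}(X)$ is exactly the union of all resilient sets disjoint from $X$ (and is itself resilient). Hence $X$ is a perfect target set iff $V(G)\setminus X$ contains no resilient set, i.e. iff $G[V(G)\setminus X]$ admits an acyclic orientation with out-degree $d^{+}(v)\le \overline{\thr}(v)\le d$ for all $v$, equivalently iff $G[V(G)\setminus X]$ can be fully removed by greedy peeling. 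In particular $G[V(G)\setminus X]$ is $d$-degenerate and has at most $d\,n$ edges, so finding a minimum perfect target set is precisely the task of deleting the fewest vertices to make the remainder $\overline{\thr}$-degenerate.

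Given this reformulation, I would invoke the monotone local search framework of Fomin, Gaspers, Lokshtanov and Saurabh: for any problem whose solution family over an $n$-element universe is upward-closed, an algorithm deciding the existence of a solution of size at most $k$ in time $c^{k}\cdot n^{\O(1)}$ yields a \emph{randomized} algorithm for the corresponding minimization problem running in time $\Ostar{(2-\epsilon)^n}$ with $\epsilon=\epsilon(c)>0$. Since perfect target sets are upward-closed, it suffices to build, for each fixed $d$, a single-exponential FPT algorithm that, given $(G,\thr)$ with $\overline{\thr}\le d$ and an integer $k$, decides whether a perfect target set of size at most $k$ exists in time $c_d^{\,k}\cdot n^{\O(1)}$. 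I would aim to obtain this by iterative compression on the deletion-to-$\overline{\thr}$-degenerate formulation: carrying a tentative deletion set $W$ of size $k+1$ and branching to compress it to size $k$, where each branching step first detects, by greedy peeling in polynomial time, a resilient obstruction in the current graph and then branches on how to hit it.

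The main obstacle is exactly this branching step. Although a resilient obstruction is found in polynomial time, a minimal resilient set can be arbitrarily large — for example a long induced cycle when all dual thresholds equal one — so the naive "delete one of its vertices" rule is not single-exponential in $k$. The crux is therefore to exploit the bound $\overline{\thr}\le d$ to expose a constant-size (in terms of $d$) branching object: combining degree-reduction reduction rules with a Feedback-Vertex-Set-style measure argument generalized from the $d=1$ case to arbitrary $d$, so that the recurrence has a branching factor $c_d$ depending only on $d$. Once the $c_d^{\,k}\cdot n^{\O(1)}$ bound is in place, feeding it into monotone local search gives the claimed randomized $\Ostar{(2-\epsilon_d)^n}$ running time with $\epsilon_d=\epsilon(c_d)>0$, which in particular decides whether the minimum perfect target set has size at most $k$.
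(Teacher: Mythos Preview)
Your structural reformulation is correct and coincides with the paper's: a set $X$ is a perfect target set iff repeatedly peeling off from $G\setminus X$ any vertex $v$ with at most $\overline{\thr}(v)$ remaining neighbours empties the graph, which for $\overline{\thr}\equiv d$ is exactly $d$-degeneracy of $G\setminus X$. The paper stops here and invokes the Pilipczuk--Pilipczuk randomized $\Ostar{(2-\epsilon_d)^n}$ algorithm for \textsc{Maximum Induced $d$-Degenerate Subgraph}, noting it adapts to vertex-specific bounds $\overline{\thr}(v)\le d$.

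The gap in your plan is the existence of the $c_d^{\,k}\cdot n^{\O(1)}$ subroutine that you intend to feed into monotone local search. For $d\ge 2$, deleting at most $k$ vertices to obtain a $d$-degenerate graph is W[P]-hard parameterized by $k$ (Mathieson, \emph{The parameterized complexity of editing graphs for bounded degeneracy}), so under standard assumptions no algorithm of the shape $c_d^{\,k}\cdot n^{\O(1)}$ exists. Your own diagnosis is on point --- minimal resilient obstructions can be arbitrarily large --- but the hoped-for ``FVS-style measure argument generalized to arbitrary $d$'' cannot succeed: the $d=1$ case (Feedback Vertex Set) is genuinely special, and this is precisely why Pilipczuk and Pilipczuk had to develop an ad hoc randomized exact algorithm rather than pass through an FPT-in-$k$ routine. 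The monotone local search route is therefore blocked for $d\ge 2$; to complete the proof you must either cite the Pilipczuk--Pilipczuk result, as the paper does, or reproduce their direct argument.
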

   \begin{proof}

 In terms of dual thresholds, we can consider the activation process as a vertex deletion process, where activated vertices are deleted from the graph. With this consideration, activation process goes in the  following way. Firstly, the target set is deleted from the graph. Then, in each consecutive round, a vertex $v$ is deleted from the remaining graph if it has at most $\overline{\thr}(v)$ neighbours remaining. When the process converges, vertices in the remaining graph are the vertices that are not activated. Thus the target set is perfect if and only if the remaining graph is empty.
 
 If $\overline{\thr}(v)=d$ for each $v\in V(G)$. Then, a vertex is deleted from the remaining graph if it has at most $d$ neighbours remaining. By definition of $d$-degeneracy, a graph becomes empty after such process if and only if it is $d$-degenerate. Thus, a target set $X$ is perfect if and only if $G\setminus X$ is $d$-degenerate. Hence, if all dual thresholds are equal to $d$, finding a maximum $d$-degenerate induced subgraph of $G$ is equivalent to finding a minimum perfect target set of $G$.
 
 In \cite{pilipczuk2012finding}, Pilipczuk and Pilipczuk presented an algorithm that solves \textsc{Maximum Induced $d$-Degenerate Subgraph} problem in randomized $(2-\epsilon_d)^n \cdot n^{\O(1)}$ time for some $\epsilon_d>0$ for any fixed $d$. Hence, instances of \textsc{Perfect Target Set Selection} where all dual thresholds are equal to $d$ can be solved in the same running time. Furthermore, one can straight-forwardly show that this algorithm can be adjusted to work when all dual thresholds are not necessarily equal, but do not exceed $d$.
 \end{proof} 
 \section{Lower bounds}

 \subsection{ETH lower bound}
First of all, we show a $2^{o(n+m)}$ lower bound
 for \textsc{Perfect Target Set Selection}, where $m$ denotes the number of edges in the input graph. We have not found any source that claims this result. Thus, for completeness, we state it here. The result follows from the reduction given by Centeno et al.\ in \cite{Centeno2011}. They showed a linear reduction from a special case of \textsc{3-SAT}, where each variable appears at most three times, to \textsc{Perfect Target Set Selection} where thresholds are equal to two and maximum degree of the graph is constant. Note that in their work they refer to the problem as \textsc{IRR$_2$-Conversion Set}.
 
\begin{theorem}
\textsc{Perfect Target Set Selection} cannot be solved in $2^{o(n+m)}$ time unless ETH fails, even when thresolds are equal to two and maximum degree of the graph is constant.
\end{theorem}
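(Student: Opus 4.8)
The plan is to chain three size-linear ingredients together, starting from the Exponential Time Hypothesis for \textsc{3-SAT} and ending at the target problem. First I would recall that ETH, together with the Sparsification Lemma of Impagliazzo, Paturi and Zane, rules out a $2^{o(n+m)}$-time algorithm for \textsc{3-SAT}, where $n$ is the number of variables and $m$ the number of clauses. The sparsification step is precisely what lets us phrase the lower bound in terms of $n+m$ rather than $n$ alone, since it reduces to the case $m=\O(n)$.

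Second, I would pass to the restricted variant of \textsc{3-SAT} in which every variable occurs at most three times, which is exactly the source problem of the reduction we wish to invoke. This is achieved by the standard occurrence-reduction gadget: if a variable $x$ occurs $c$ times, we replace its occurrences by fresh copies $x_1, \ldots, x_c$ and add the cyclic chain of implication clauses $(\neg x_1 \vee x_2), (\neg x_2 \vee x_3), \ldots, (\neg x_c \vee x_1)$, which forces all copies to share one truth value. Each copy then appears exactly three times (once in an original clause and twice in the chain), the transformation introduces only $\O(c)$ new variables and clauses per original variable, and hence inflates $n+m$ by only a constant factor. Consequently, bounded-occurrence \textsc{3-SAT} likewise admits no $2^{o(n+m)}$-time algorithm unless ETH fails.

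Third, I would apply the reduction of Centeno et al.~\cite{Centeno2011} from this bounded-occurrence \textsc{3-SAT} to \textsc{Perfect Target Set Selection} with all thresholds equal to two and constant maximum degree. The crucial feature to verify is that this reduction is \emph{linear}: it produces a graph $G$ on $N = \O(n+m)$ vertices such that the input formula is satisfiable if and only if $(G, \thr)$ admits a perfect target set of the prescribed size. Because the maximum degree is a fixed constant, the number of edges $M$ is automatically $\O(N) = \O(n+m)$, so $N+M = \O(n+m)$, and the all-thresholds-two and bounded-degree guarantees come directly from the cited construction.

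Putting the pieces together, a hypothetical algorithm solving \textsc{Perfect Target Set Selection} in $2^{o(N+M)}$ time on instances with thresholds two and bounded degree would, composed with the linear reduction, decide bounded-occurrence \textsc{3-SAT} in $2^{o(n+m)}$ time, contradicting ETH. The main obstacle here is not conceptual but a matter of careful bookkeeping: one must confirm that \emph{every} gadget in the chain---both the occurrence-reduction clauses and the graph gadgets of Centeno et al.---expands the instance by at most a constant factor in both vertices and edges, so that the composed reduction is genuinely linear and the $n+m$ measure is preserved throughout.
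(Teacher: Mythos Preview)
Your proposal is correct and follows essentially the same approach as the paper: both chain ETH plus the Sparsification Lemma through the folklore reduction to bounded-occurrence \textsc{3-SAT}, and then invoke the linear reduction of Centeno et al.\ to \textsc{Perfect Target Set Selection} with thresholds two and bounded degree. You simply spell out the occurrence-reduction gadget in more detail than the paper, which merely cites it as a well-known fact.
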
 \begin{proof}

\textsc{3-Bounded-3-SAT} is a version of \textsc{3-SAT} with a restriction that each variable appears at most three times in a formula. It is a well-known fact that an instance of \textsc{3-SAT} with $n$ variables and $m$ clauses can be transformed into an instance of \textsc{3-Bounded-3-SAT} with $\mathcal O(m)$ variables and $\mathcal O(m)$ clauses, in polynomial time. Then, according to the Exponential-Time Hypothesis with Sparsification Lemma, it follows that \textsc{3-Bounded-3-SAT} cannot be solved in $2^{o(n+m)}$ time.

In Theorem 2 in \cite{Centeno2011} Centeno et al.\ have shown how to reduce an instance of \textsc{3-Bounded-3-SAT} to an instance of \textsc{Perfect Target Set Selection} with thresholds equal to two in polynomial time. In this reduction, the number of vertices and edges of a resulting graph remain linear over the length of the initial formula. In other words, an instance of \textsc{3-Bounded-3-SAT} with $\O(n)$ variables and $\O(m)$ clauses can be reduced to an instance of \textsc{PTSS} with $\O(n+m)$ vertices of constant maximum degree and thresholds equal to two, in polynomial time. This implies that such instances of \textsc{PTSS} cannot be solved in $2^{o(n+m)}$ time.
\end{proof} 
\subsection{Parameterization by \texorpdfstring{$\ell$}{l}}

We now look at \textsc{Target Set Selection} from parameterized point of view. In \cite{Bazgan2014}, Bazgan et al.\ proved that 
\textsc{Target Set Selection} $\ell$ is W[1]-hard with respect to parameter $\ell$, when all dual thresholds are equal to $0$. This result also follows from the proof of W[1]-hardness of \textsc{Cutting $\ell$ Vertices} given by Marx in \cite{Marx2006}, with a somewhat different construction. Inspired by his proof, we show that this result holds even when all thresholds are constant.

\begin{theorem}
\textsc{Target Set Selection} parameterized by $\ell$ is W[1]-hard even when all thresholds are equal to two.
\end{theorem}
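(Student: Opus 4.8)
The plan is to give an FPT reduction from \textsc{Multicolored Clique} (equivalently \textsc{Clique}), which is W[1]-hard parameterized by the clique size $t$, to \textsc{Target Set Selection} with all thresholds equal to two, producing an instance in which the parameter $\ell$ is bounded by a function of $t$ only. First I would record a simple but guiding observation: in any instance with thresholds two, if the seed budget $k$ satisfies $k \ge \ell$ then the instance is trivially a yes-instance, since seeding any $\ell$ vertices directly already yields $|\mathcal{S}(X)| \ge \ell$. Hence every nontrivial instance has $k < \ell$, so it suffices (and is in fact forced) to build an instance in which both $k$ and $\ell$ are $\Theta(t^2)$; the large, $\mathrm{poly}(n)$ part of the construction will live in $G$ itself, while the activated region stays of size a function of $t$.

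The construction would use three kinds of gadgets, all with threshold exactly two. First, for each colour class / slot $p \in [t]$ a \emph{selection gadget} containing one candidate vertex per vertex of that class of $H$; the target set is meant to seed exactly one candidate, thereby choosing a vertex of $H$ for slot $p$. Second, for each pair of slots $\{p,q\}$ an \emph{edge gadget} containing one candidate per edge of $H$ joining the two classes, the seed again choosing one edge. Third, \emph{consistency} vertices that, using threshold two, fire only when the edge chosen for $\{p,q\}$ is incident to precisely the vertices chosen for slots $p$ and $q$: each such vertex is made adjacent to the relevant ``vertex-chosen'' and ``edge-chosen'' indicators, so that it activates if and only if both agree. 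To keep every threshold equal to two rather than one or zero where a gadget would naturally want a smaller value, I would attach a small permanently-activated gadget (for instance two universally adjacent pre-seeded vertices, or a triangle forced into every solution) and wire it in to supply the missing activated neighbours; this realises effective thresholds $0$ and $1$ while the literal threshold stays $2$.

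I would then set $\ell$ equal to the number of vertices activated by a fully consistent choice, namely the $t$ selected vertices, the $\binom{t}{2}$ selected edges, the consistency vertices, and the fixed always-on gadget, so that $\ell = \Theta(t^2)$, and set $k$ to the number of seeds such a choice requires. The correctness proof splits as usual. For the forward direction, a multicoloured clique of $H$ yields the obvious seeding (its vertices and its edges), which by design fires every consistency vertex and activates exactly $\ell$ vertices within budget. For the reverse direction I would argue that, since $k < \ell$ and the budget pays for only one selection per slot and one per pair, reaching $\ell$ activated vertices forces every consistency vertex to fire; the threshold-two wiring then forces the chosen edges and chosen vertices to be mutually incident across all pairs, which is exactly a multicoloured clique.

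The main obstacle I anticipate is concentrated in the reverse direction and is twofold: designing the consistency gadget so that it genuinely \emph{cannot} be fired without an honest, mutually incident choice (ruling out ``cheating'' seedings that, exploiting the large structure of $G$, light up verification vertices through unintended cascade paths), and making the counting tight enough that \emph{every} seeding reaching $\ell$ activations must make exactly one consistent selection per slot and per pair, rather than over-seeding one gadget at the expense of another. Controlling these cascades under the fixed threshold-two rule — in particular preventing spurious activation of candidate and indicator vertices once downstream consistency vertices light up — is where the bulk of the work lies, and is precisely the point at which the attachment of the always-on gadget must be designed with care.
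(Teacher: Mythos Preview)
Your proposal is a plan, not a proof: you correctly locate the difficulty in the reverse direction and in the design of the consistency and always-on gadgets, but you do not actually carry out either. As written there is no concrete graph $G'$, no precise wiring of the consistency vertices, and no argument that an arbitrary seed of size $k$ activating $\ell$ vertices must make one honest selection per slot and per pair. So the submission has a genuine gap exactly where you yourself flag it.

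More importantly, the paper shows that all of this machinery is unnecessary. Its reduction from \textsc{Clique} is the bare incidence bipartite graph: $V(G')=V(G)\sqcup E(G)$, with $v_e$ adjacent to $v_u$ iff $u$ is an endpoint of $e$, all thresholds equal to $2$, the same budget $k$, and $\ell=k+\binom{k}{2}$. No selection gadgets, no consistency vertices, no always-on block. The point is that every edge-vertex has degree exactly two, so with threshold two it activates iff both its endpoints are seeded, and once activated it cannot trigger anything further; since the graph is bipartite the whole process ends within two rounds. The forward direction is then immediate: seeding the $k$ clique vertices lights up the $\binom{k}{2}$ clique edges.

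For the reverse direction the paper does not use a counting/consistency argument of the kind you sketch, but an \emph{exchange} argument: starting from any size-$\le k$ seed $S$ activating at least $\ell$ vertices, it pushes the seed entirely into the $V$-side. First replace any seeded edge-vertex one of whose endpoints is also seeded by the other endpoint. Then, letting $V_1$ be the $V$-vertices activated in round one, a double-counting of incidences shows $|V_1|\le k_{e,1}/2+k_{e,2}$ (where $k_{e,i}$ counts seeded edge-vertices with $i$ endpoints in $V_1$), so swapping all seeded edge-vertices for $V_1$ plus at most $k_{e,1}/2$ further $V$-vertices yields a seed $S_1\subseteq V$ of size $\le k$ still activating $\ge \ell$ vertices. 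Since a $V$-only seed finishes after one round and activates exactly the edges spanned by it, $|S_1|+\binom{|S_1|}{2}\ge k+\binom{k}{2}$ forces $S_1$ to be a $k$-clique.

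If you want to salvage your route, you would need to fully specify the gadgets and prove the no-cheating lemma; but the incidence-graph reduction above achieves the same result with essentially no gadgetry, and I recommend adopting it.
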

\begin{proof}

Let $(G, k)$ be an instance of the \textsc{Clique} problem. In order to provide the reduction, we construct a graph $G'$ in which each vertex corresponds to a vertex or an edge of graph $G$ i.e.\ $V(G')=V(G) \sqcup E(G)$. We add edges in $G'$ between vertices corresponding to $v \in V(G)$ and $e \in E(G)$ if and only if $v$ and $e$ are incident in $G$.

We will refer to the vertex in $G'$ corresponding to an edge $e \in E(G)$ as $v_e \in V(G')$. If a vertex from $G'$ corresponds to a vertex $u\in G$ we refer to it as $v_u$. Slightly abusing notation we will refer 
to the set of vertices in $G'$ corresponding to the vertices $V(G)$ as $V$ and to the set of vertices corresponding to the edges $E(G)$ as $E$, $V\sqcup E=V(G')$.  Consider now an instance  of \textsc{Target Set Selection} for $G'$, with the same $k$, $\ell=k+\binom{k}{2}$ and all thresholds equal to $t=2$.

If $G$ has a clique of size $k$, then selecting corresponding vertices as a target set of $G'$ leads to activation of the vertices corresponding to the edges of the clique. Hence, $k+\binom{k}{2}$ vertices will be activated in total.

Let us now prove that if $G'$ has a target set of size at most $k$ activating at least $\ell=k+\binom{k}{2}$ vertices, then $G$ has a clique on $k$ vertices. Let $S$ be such  target set of $G'$. Denote by $k_v=|S \cap V|$ the number of vertices in $S$ corresponding to the vertices of $G$ and by $k_e=|S \cap E|$ the number of vertices in $S$ corresponding to the edges of $G$, $k_v+k_e\le k$.

Now, we show how to convert any target set $S$ of size at most $k$ activating at least $k+\binom{k}{2}$ vertices into a target set $S'$ such that $|S'|\leq k$, $S' \subseteq V$ and $S'$ activates at least $k+\binom{k}{2}$ vertices. 

Observe that if there is an edge $u_1u_2=e\in E(G)$ such that $v_e \in S$ and $v_{u_1} \in S$ then $S'=S\setminus \{v_e\} \cup \{v_{u_2}\}$  also activates at least $k+\binom{k}{2}$ vertices and the size of $S'$ is at most $k$. Thus we can assume that if $v_{u_1u_2} \in S$, then $v_{u_1},v_{u_2} \not\in S$.   

Observe that any initially not activated vertex in $E$ becomes activated only if all two of its neighbours are activated. It means that any such vertex does not influence the activation process in future. Hence, since $G'$ is bipartite, the activation process always finishes within two rounds, and no vertex in $V$ becomes activated in the second round.

Let $V_1$ be the set of vertices of $V$ that become activated by $S$ in the first round, i.e.\ $V_1=\mathcal{S}_1(S) \setminus \mathcal{S}_0(S) \cap V$. Note that these vertices are activated directly by $k_e$ vertices in $S\cap E$. Let $S_{E,i}$ be the set of vertices in $S\cap E$ that have exactly $i$ endpoints in $V_1$. Denote by $k_{e,i}$ the size of $S_{E,i}$. Then we have $k_{e,0}+k_{e,1}+k_{e,2}=k_e$. Note that if there is a vertex in $S\cap E$ with no endpoints in $V_1$ then one can replace it with any neighbour and size of $S$ will not change and it will activate at least the same number of vertices in $G'$. Thus we can assume that $k_{e,0}=0$.

We show that $|V_1| \le \frac{k_{e,1}}{2}+k_{e,2}$. Indeed, in order to be activated, any vertex from $V_1$ requires at least two vertices from $E$ to be in the target set.  Each vertex from $S_{E,i}$ contributes to exactly $i$ vertices from $V_1$, and the total number of contributions is $k_{e,1}+2k_{e,2}$. This number  should be at least $2|V_1|$. Hence, $|V_1| \le \frac{k_{e,1}}{2}+k_{e,2}$.

Consider $S' = S \setminus E\cup V_1$ i.e.\ we replace all $k_e$ vertices from $E$ with all vertices from $V_1$. Note that $|S'|\leq |S| - \frac{k_{e,1}}{2}$. Vertices from $S_{E,2}$ become activated in the first round since all of them have two endpoints in $S'$. Thus $S'$ is now a target set of size not greater than $k-\frac{k_{e,1}}{2}$ activating at least $\ell-k_{e,1}$ vertices in $G'$. 

Note that any vertex from $S_{E,1}$ can be activated by adding one more vertex to $S'$. Consider set $H=N(S_{E,1})\setminus V_1$. If $|H|\leq \frac{k_{e,1}}{2}$ then consider  $S_1=H\cup S'$. $S_1$ compared to $S'$ will additionally activate all vertices in $S_{E,1}$. Note that $S_1$ is a target set $S$ of size at most $k$ activating at least $\ell$ vertices.

If $|H| > \frac{k_{e,1}}{2}$ then construct $S_1$ from $S'$ by simply adding $\frac{k_{e,1}}{2}$ arbitrary vertices from $H$. Each of these vertices will additionally activate at least one vertex corresponding to edge, thus $S_1$ is a target set of size at most $k$ activating at least $\ell$ vertices. 

We have shown how to transform any target set $S$ activating at least $k+\binom{k}{2}$ vertices in $G'$ into a target set $S_1$ such that $S_1\subseteq V$ and $S_1$ activates at least the same number of vertices in $G'$. As we have shown earlier, no vertex in $E \setminus S_1$ influence the activation process after becoming activated. Then, since $S_1 \cap E = \emptyset$, $S_1$ activates only vertices in $E$ in the first round and the process finishes. Hence, if the instance for $G'$ has a solution, then $G$ has a clique of size $k$.
\end{proof} 
\bibliography{ref-pretty}

\end{document}